\theoremstyle{plain}
\newtheorem{theorem}{Theorem}
\newtheorem{prop}[theorem]{Proposition}
\newtheorem{theo}[theorem]{Theorem}
\newtheorem{lemma}[theorem]{Lemma}
\newtheorem{definition}[theorem]{Definition}
\newtheorem{example}[theorem]{Example}
\theoremstyle{remark}
\numberwithin{theorem}{section}
\newcommand{\be}%
  {\protect\setcounter{equation}{\value{subsubsection}}}  
\newcommand{\ee}%
\newcommand{\F}{\mathbb{F}}
\numberwithin{Theorem}{section}
\numberwithin{Lemma}{section}
\numberwithin{Corollary}{section}
\numberwithin{Example}{section}
\numberwithin{Remark}{section}
\date{}
\begin{document}
\title{Cyclic codes over the ring $\F_p[u,v,w]/\langle u^2, v^2, w^2, uv-vu, vw-wv, uw-wu \rangle$}

\author {Pramod Kumar Kewat and Sarika Kushwaha 
\\\textit{Department of Applied Mathematics, Indian School of Mines,} \\\textit{Dhanbad-826 004, Jharkhand, India}\\}
\maketitle 
\footnotetext {\hspace{-.65cm}
\textit{E-mail addresses}: pramodkewat@gmail.com, sarika120189@gmail.com.\\
 }


\begin{abstract}
 In this paper, we  investigate  cyclic codes over the ring  $ \F_p[u,v,w]\langle u^2,$ $v^2, w^2$, $uv-vu, vw-wv, uw-wu \rangle$, where $p$ is a prime number. Which is a part of family of Frobenius rings. We find a unique set of generators for these codes and characterize the free cyclic codes. We also study the rank and the Hamming distance of these codes. We also constructs some good $p-ary$ codes as the Gray images of these cyclic codes.
 \end{abstract}
  \maketitle
\markboth{P.K. Kewat and S. Kushwaha}{Cyclic codes over the ring $R_{u^2,v^2,w^2,p}$}
 \textbf{Keywords:} Cyclic codes, Hamming distance, Gray Map.
\section{Introduction}

Cyclic codes are key families of linear codes because of their lavish algebraic structures and practical accomplishment. Codes over finite rings have been studied in the early 1970’s. A  considerable  attention has been given to codes over finite rings since 1990 because of their outstanding role in algebraic coding theory and their affluent applications. Recently, codes over some special finite rings mainly chain rings have been studied. Cyclic codes over different finite chain rings have been studied in \cite{Tah-Siap07},\cite{Ash-Ham11},\cite{Bonn-Udaya99},\cite{Dinh10},\cite{Dinh-Lopez04},\cite{Aks-Pkk13}. More latterly, cyclic codes over finite non-chain rings have also been contemplated. However, the analysis on non-chain rings seems to be challenging as the algebraic structure does not allow to give a nice and compact presentation of linear codes over these rings.  Yildiz and Karadeniz in \cite{Yil-Kar11} studied cyclic codes of odd length over non-chain ring $\F_2[u,v]/\langle u^2, v^2, uv-vu\rangle$ and found some good binary codes as the Gray images of these cyclic codes. Sobhani and Molakarimi in \cite{Sob-Mor13} extended these studies to cyclic codes over the ring $F_{2^m}[u,v]/\langle u^2,$ $v^2, uv-vu \rangle$. The authors of \cite{KGP15} extended these studies to cyclic codes over the ring $\F_p[u,v]/\langle u^2, v^2,$ $uv-vu \rangle$ and have found some good ternary codes as the Gray images of these cyclic codes. The authors of \cite{Dou-Yil-Kar12} have considered a family of rings $F_2[u_1,u_2,\cdots u_k]/\langle u_1^2,$ $u_2^2,\cdots, u_k^2, u_iu_j-u_ju_i \rangle$ and  characterized the nontrivial one-generator cyclic codes over these rings.\\
\indent
In this paper, we study the cyclic codes of arbitrary length $n$ over the ring $R_{u^2, v^2, w^2, p}$ = $\F_p[u,v,w]/\langle u^2, v^2, w^2, uv-vu, vw-wv, uw-wu \rangle$, where $p$ is a prime number. Note that the ring $R_{u^2, v^2, w^2, p}$ can also be viewed as the ring $ \F_p + u \F_p + v \F_p + uv \F_p+ w \F_p + uw \F_p + vw \F_p + uvw \F_p$, $ u^2=0$, $v^2=0$,$w^2=0$ and $uv = vu$, $vw=wv$, $uw=wu$. The techniques we have used to find a set of generators are similar to the techniques discussed in \cite{Tah-Siap07,KGP15,Aks-Pkk13}. Let C be a cyclic code over the ring $R_{u^2, v^2, w^2, p}$.  We view the cyclic code $C$ as an ideal in the ring $R_{u^2, v^2, w^2, p,n}$=$R_{u^2, v^2, w^2, p}/\left\langle x^n-1 \right\rangle$. Then we define the projection map from $R_{u^2, v^2, w^2, p,n}$ to $R_{u^2, v^2, p,n}$ and we get an ideal in the ring $R_{u^2, v^2, p,n}$. The structure of cyclic codes over the ring $R_{u^2, 
v^2, p}$ is known from \cite{KGP15}. By pullback, we find a set of generators for a cyclic code over the ring $R_{u^2, v^2,w^2 p}$. We also provide the characterization of the free cyclic codes over the ring $R_{u^2,v^2,w^2,p}$. When $n$ is relatively prime to $p$,  we get a simpler form for a set of generators of these cyclic codes. By using the division algorithm and direct computations, we find the rank and the minimal spanning set of these cyclic codes. We also find the Hamming distance of these codes for the length $p^l$. Again, the techniques we have used to find the minimum distance are similar to those discussed in  \cite{KGP15,Aks-Pkk13}.\\
\indent
This paper is organized as follows: In Section 2, we give some basic definitions and define a Gray map for a linear code over $R_{u^2, v^2, w^2, p}$. In Section 3, we find a unique set of generators along with the conditions on these generators. We also discuss here the generating polynomials for the case of a free cyclic code and $n$ relatively prime to $p$. In Section 4, we find rank and a minimal spanning set for these codes. In Section 5, we find the minimum distance of these codes for length $p^l$. In Section 6, we discuss some examples in which we construct some near optimal codes over $ \F_2, \F_3, \F_5 $ of length 32, 24, 40 respectively as the images of cyclic codes over the ring $R_{u^2, v^2, w^2, p}$ under the Gray map.

\section{Preliminaries}
A ring with the unique maximal ideal is called a local ring. Let $R$ be a finite commutative local ring with the maximal ideal $M$. Let $\overline{R} = R\textfractionsolidus M$ be the residue field and $\mu : R[x] \rightarrow \overline{R}[x]$ denote the natural ring homomorphism that maps $r \mapsto r + M$ and the variable $x$ to $x$. The degree of the polynomial $f(x) \in R[x]$ is defined as the degree of the polynomial $\mu(f(x))$ in $\overline{R}[x]$, i.e., $deg(f(x)) = deg(\mu(f(x))$ (see, for example, \cite{McDonald74}). A polynomial $f(x) \in R[x]$ is called regular if it is not a zero divisor.
The following conditions are equivalent for a finite commutative local ring $R$.
\begin{prop} {\rm (cf. \cite[Exercise XIII.2(c)]{McDonald74})} \label{regular-poly}
Let $R$ be a finite commutative local ring. Let $f(x) = a_0+a_1x+ \cdots +a_nx^n$ be in $R[x]$, then the following are equivalent.
\begin{enumerate}[{\rm (1)}]
\item $f(x)$ is regular; \label{1}
\item  $\langle a_0, a_1, \cdots , a_n \rangle = R$; \label{4} 
\item  $a_i$ is an unit for some $i$, $0 \leq i \leq n$; \label{3}
\item  $\mu(f(x)) \neq 0$; \label{2}
\end{enumerate}
\end{prop}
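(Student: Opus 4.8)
The plan is to prove the four statements equivalent by first disposing of the two elementary equivalences $(2)\Leftrightarrow(3)\Leftrightarrow(4)$ and then attacking the genuinely substantive link between regularity $(1)$ and the others. Throughout I would lean on two structural facts about a finite commutative local ring $R$ with maximal ideal $M$: first, an element of $R$ is a unit precisely when it lies outside $M$, so that $\mu(r)\neq 0$ in $\overline R = R/M$ iff $r$ is a unit; second, since $R$ is finite, $M$ is nilpotent, say $M^{N}=0$ with $M^{N-1}\neq 0$.

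For $(3)\Leftrightarrow(4)$: by definition $\mu(f(x)) = \sum_i (a_i + M)x^i$, and this is a nonzero element of $\overline R[x]$ iff some $a_i \notin M$, i.e.\ iff some $a_i$ is a unit. For $(2)\Leftrightarrow(3)$: if some $a_i$ is a unit then $\langle a_0,\dots,a_n\rangle$ already contains a unit and hence equals $R$; conversely, if no $a_i$ is a unit then every $a_i$ lies in $M$, so $\langle a_0,\dots,a_n\rangle \subseteq M \subsetneq R$. These use nothing beyond the local hypothesis.

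The heart of the proposition is $(1)\Leftrightarrow(4)$. For $(1)\Rightarrow(4)$ I argue contrapositively: if $\mu(f)=0$ then all $a_i\in M$, so picking any nonzero $c\in M^{N-1}$ gives $c\,a_i \in M^{N}=0$ for every $i$, whence the nonzero constant $c$ annihilates $f$ and $f$ is a zero divisor. For the converse $(4)\Rightarrow(1)$, suppose $\mu(f)\neq 0$ and $fg = 0$ for some $g\in R[x]$; I must force $g=0$ by a descent along the $M$-adic filtration. Reducing $fg=0$ modulo $M$ gives $\mu(f)\mu(g)=0$ in $\overline R[x]$, which is an integral domain because $\overline R$ is a field, whence $\mu(g)=0$ and $g\in M R[x]$. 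More generally, if $g\in M^{t}R[x]$ then, working modulo $M^{t+1}$ and using $M\cdot M^{t}\subseteq M^{t+1}$, the product $fg \bmod M^{t+1}$ depends only on $f \bmod M$; viewing the reduction of $g$ in $(M^{t}/M^{t+1})[x]$, a free module over the domain $\overline R[x]$ since $M^{t}/M^{t+1}$ is an $\overline R$-vector space, multiplication by the nonzero element $\mu(f)$ is injective, forcing $g\in M^{t+1}R[x]$. Iterating and invoking $M^{N}=0$ yields $g=0$.

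I expect this descent step in $(4)\Rightarrow(1)$ to be the main obstacle: one must organize the induction on powers of $M$ correctly and exploit that each successive quotient $M^{t}/M^{t+1}$ is a vector space over the residue field, so that the domain property of $\overline R[x]$ can be brought to bear. An alternative that sidesteps the filtration is to invoke McCoy's theorem --- a zero divisor in $R[x]$ is annihilated by a nonzero constant of $R$ --- after which $(4)\Rightarrow(1)$ is immediate, since a nonzero $c$ with $cf=0$ would satisfy $c\,a_i = 0$ for the unit coefficient $a_i$, giving the contradiction $c=0$. Either route closes the cycle and establishes the equivalence of all four conditions.
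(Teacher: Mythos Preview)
Your proof is correct. Note, however, that the paper does not actually prove this proposition: it is stated with a citation to \cite[Exercise XIII.2(c)]{McDonald74} and used as a black box. So there is no ``paper's own proof'' to compare against; you have supplied a complete argument where the authors chose to defer to the literature.

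On the substance: the equivalences $(2)\Leftrightarrow(3)\Leftrightarrow(4)$ are handled cleanly, and both of your routes for $(4)\Rightarrow(1)$ are valid. The $M$-adic descent is the more self-contained one, and your key observation---that $M^{t}/M^{t+1}$ is an $\overline{R}$-vector space, so $(M^{t}/M^{t+1})[x]$ is free over the domain $\overline{R}[x]$ and multiplication by the nonzero $\mu(f)$ is injective---is exactly what makes the induction go through. The McCoy alternative is shorter if one is willing to quote that result. Either way, the finiteness of $R$ enters only through the nilpotence of $M$, which you use correctly in both directions of $(1)\Leftrightarrow(4)$.
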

The following version of the division algorithm holds true for polynomials over finite commutative local rings.
\begin{prop}
let $R$ be a finite commutative local ring. Let f(x) and g(x) be non zero polynomials in $R[x]$. If $g(x)$ is regular, then there exist polynomials $q(x)$ and $r(x)$ in $R[x]$ such that $f(x)=g(x)q(x)+r(x)$ and $deg(r(x))<deg(g(x))$. 
\end{prop}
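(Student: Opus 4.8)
The plan is to mimic the classical long-division argument, inducting on the degree of $f(x)$, while keeping careful track of the fact that here $\deg$ means $\deg \circ \mu$. First I would record the crucial consequence of regularity: if $m = \deg(g(x)) = \deg(\mu(g(x)))$ and we write $g(x) = \sum_i b_i x^i$, then the coefficient $b_m$ is a unit. Indeed, $\deg(\mu(g(x))) = m$ forces the coefficient of $x^m$ in $\mu(g(x))$, namely $\mu(b_m)$, to be nonzero, so $b_m \notin M$; since $R$ is local this means $b_m$ is a unit. (Equivalently, this is the content of Proposition \ref{regular-poly}.) Moreover $\mu(b_i) = 0$, i.e.\ $b_i \in M$, for every $i > m$.

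Next I would set up the induction on $d = \deg(f(x))$. If $d < m$, there is nothing to do: take $q(x) = 0$ and $r(x) = f(x)$. If $d \geq m$, let $a$ be the coefficient of $x^d$ in $f(x)$; as above, $d = \deg(\mu(f(x)))$ forces $a$ to be a unit. I would then form
\[
f_1(x) = f(x) - a\, b_m^{-1}\, x^{d-m}\, g(x),
\]
which is the usual ``cancel the top term'' step, legitimate because $b_m$ is a unit.

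The key step --- and the one place where this setting differs from division over a field --- is to verify that $\deg(f_1(x)) < d$. The term $a\, b_m^{-1}\, x^{d-m}\, g(x)$ cancels the coefficient of $x^d$ in $f(x)$, but because $g(x)$ may carry nonzero coefficients in degrees $i > m$, the subtraction can create terms of degree larger than $d$. The point is that all such coefficients lie in $M$: they are products of the unit $a\, b_m^{-1}$ with the coefficients $b_i \in M$ (for $i > m$), together with whatever coefficients $f(x)$ already had above degree $d$, which are likewise in $M$ since $\deg(\mu(f(x))) = d$. Hence every coefficient of $f_1(x)$ in degree $\geq d$ lies in $M$ and is killed by $\mu$, so $\deg(f_1(x)) = \deg(\mu(f_1(x))) < d$.

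Finally, applying the induction hypothesis to $f_1(x)$ yields $f_1(x) = g(x) q_1(x) + r(x)$ with $\deg(r(x)) < m$; substituting back gives
\[
f(x) = g(x)\bigl(a\, b_m^{-1}\, x^{d-m} + q_1(x)\bigr) + r(x),
\]
which is the desired decomposition. The induction is well-founded because $d$ is a nonnegative integer that strictly decreases at each step. I expect the main obstacle to be exactly the degree-tracking in the key step: one must resist the temptation to argue with the ``true'' top coefficient of $g(x)$ and instead work consistently with the $\mu$-degree, observing that the extra high-degree terms introduced by the subtraction are automatically in the maximal ideal and therefore invisible to $\mu$.
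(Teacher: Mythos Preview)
The paper does not supply a proof of this proposition; it is stated as a known background result (in the spirit of the division algorithm from \cite{McDonald74}) and used freely thereafter. So there is no paper proof to compare against.

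Your argument is correct. The essential observation you isolate---that regularity of $g(x)$ together with locality of $R$ forces the coefficient $b_m$ at the $\mu$-degree to be a unit, while all higher coefficients lie in $M$---is exactly what makes the classical long-division step go through, and your handling of the ``extra'' high-degree terms introduced by the subtraction is the right one: they all land in $M$, hence vanish under $\mu$, so the $\mu$-degree genuinely drops. Two cosmetic points worth tightening if you write this out formally: (i) the base case should explicitly absorb the possibility $\mu(f(x)) = 0$ (so $\deg f(x) = -\infty < m$), which your scheme already handles by taking $q=0$, $r=f$; and (ii) after the subtraction $f_1(x)$ may be the zero polynomial, in which case the recursion terminates immediately with $r=0$. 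Neither of these is a gap---your induction is well-founded once $d \geq m \geq 0$---but stating them removes any ambiguity about the convention for $\deg(0)$.
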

\subsection{The Ring $R_{u^2, v^2,w^2, p}$}

Let $R_{u^2, v^2, w^2, p}$ =$ \F_p + u \F_p + v \F_p + uv \F_p+ w \F_p + uw \F_p + vw \F_p + uvw \F_p$, $ u^2=0$, $v^2=0$,$w^2=0$ and $uv = vu$, $vw=wv$, $uw=wu$,
 It is easy to see that the ring $R_{u^2, v^2, w^2, p}$ is a finite local ring with the unique maximal ideal $\langle u,v,w\rangle$. Let $f_1(x)$ be a non zero polynomial in $\F_p[x]$. By Proposition \ref{regular-poly}, it is also easy to see that the polynomial $f_1(x)+uf_{1,2}(x)+vf_{1,3}(x)+uvf_{1,4}(x)+wf_{1,5}(x)+uwf_{1,6}(x)+vwf_{1,7}(x)+uvwf_{1,8}(x)  \in R_{u^2, v^2, w^2, p}[x]$ is regular. Note that $\text{deg}(f_1(x)+uf_{1,2}(x)+vf_{1,3}(x)+uvf_{1,4}(x)+wf_{1,5}(x)+uwf_{1,6}(x)+vwf_{1,7}(x)+uvwf_{1,8}(x))= \text{deg}(f_1(x))$.

\subsection{ The Gray map} 
Let $w_L$ and $w_H$ denote the Lee weight and the Hamming weight respectively. We define the Lee weight as follows:
\[w_L(\alpha)=w_H(\phi_L(\alpha)), ~\text{for all} ~ \alpha \in R_{u^2,v^2,w^2,p},\]
where the Gray map $\phi_L:R_{u^2, v^2, w^2, p}\rightarrow \F_p^{8}$ is defined as follows:\\
$\phi_L(\alpha_1+u\alpha_2+v\alpha_3+uv\alpha_4+w\alpha_5+uw\alpha_6+vw\alpha_7+uvw\alpha_8)=(\alpha_8,\alpha_6+\alpha_8,\alpha_7+\alpha_8,\alpha_4+\alpha_8,\alpha_5+\alpha_6+\alpha_7+\alpha_8,\alpha_2+\alpha_4+\alpha_6+\alpha_8,\alpha_3+\alpha_4+\alpha_7+\alpha_8,\alpha_1+\alpha_2+\alpha_3+\alpha_4+\alpha_5+\alpha_6+\alpha_7+\alpha_8)$\\
The Gray map naturally extend to $R_{u^2, v^2, w^2, p}^n$ as distance preserving isometry \\$\phi_L:(R_{u^2, v^2, w^2, p}^n, ~ \text{Lee weight}) \rightarrow (\F_p^{8n}, ~ \text{Hamming weight})$ as follows
\[\phi_L(\alpha_1, \alpha_2, \cdots, \alpha_n)\rightarrow (\phi_L(\alpha_1),\phi_L(\alpha_2), \cdots, \phi_L(\alpha_n)), ~\forall ~ a_i \in R_{u^2, v^2, w^2, p}.\]
By linearity of the map $\phi_L$ we obtain the following theorem.
\begin{theo}
If $C$ is a linear code over $R_{u^2, v^2, w^2, p}$ of length $n$, size $p^k$ and minimum lee weight $d$, then $\phi_L(C)$ is a $p$-ary linear code with parameters $[8n,k,d]$.
\end{theo}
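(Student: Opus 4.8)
The plan is to deduce everything from three properties of the single-coordinate map $\phi_L\colon R_{u^2,v^2,w^2,p}\to\F_p^{8}$: that it is $\F_p$-linear, that it is a bijection, and that it sends the Lee weight to the Hamming weight by construction. First I would observe that each of the eight output entries in the definition of $\phi_L$ is an $\F_p$-linear combination of the coefficients $\alpha_1,\dots,\alpha_8$; hence $\phi_L$ is an $\F_p$-linear map between two $\F_p$-vector spaces, each of cardinality $p^{8}$, and so is its coordinatewise extension $R_{u^2,v^2,w^2,p}^{n}\to\F_p^{8n}$.

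The only step needing real care is bijectivity, for which it is enough to show that the $8\times 8$ matrix $A$ over $\F_p$ representing $\phi_L$ is invertible. Here I would index both the monomial basis $1,u,v,uv,w,uw,vw,uvw$ and the eight target coordinates by the subsets $S\subseteq\{u,v,w\}$; a direct inspection of the formula shows that the coordinate attached to $S$ is exactly $\sum_{T\supseteq S}\alpha_T$. Thus $A$ is the zeta matrix of the Boolean lattice on three elements. Ordering the subsets so that larger sets precede smaller ones makes $A$ lower unitriangular, so $\det A=1$ over $\mathbb{Z}$ and $A$ is invertible modulo every prime $p$; equivalently, $A$ is the threefold Kronecker product of $\left(\begin{smallmatrix}0&1\\1&1\end{smallmatrix}\right)$, whose determinant is $-1$. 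Either way $\phi_L$ is a bijection for all $p$, and its extension $\phi_L\colon R_{u^2,v^2,w^2,p}^{n}\to\F_p^{8n}$ is a linear isomorphism of $\F_p$-spaces.

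Granting these facts, the three conclusions are immediate. Since $\F_p\subseteq R_{u^2,v^2,w^2,p}$, the code $C$ is in particular an $\F_p$-subspace of $R_{u^2,v^2,w^2,p}^{n}$, so $\phi_L(C)$ is an $\F_p$-subspace of $\F_p^{8n}$, i.e.\ a $p$-ary linear code, evidently of length $8n$. Injectivity gives $|\phi_L(C)|=|C|=p^{k}$, whence $\dim\phi_L(C)=k$. For the distance I would use linearity to replace the minimum Hamming distance of $\phi_L(C)$ by its minimum nonzero Hamming weight, and then for $c=(c_1,\dots,c_n)\in C$ compute $w_H(\phi_L(c))=\sum_{i=1}^{n}w_H(\phi_L(c_i))=\sum_{i=1}^{n}w_L(c_i)=w_L(c)$ straight from the definition of the Lee weight; injectivity ensures $\phi_L(c)\neq 0$ when $c\neq 0$. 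Minimizing over nonzero codewords then identifies the minimum Hamming distance of $\phi_L(C)$ with the minimum Lee weight $d$ of $C$, giving the parameters $[8n,k,d]$.

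The main obstacle, such as it is, lies entirely in the invertibility of $A$; once that is settled uniformly in $p$, the length, dimension, and distance are formal consequences of linearity and the weight-preserving definition of $\phi_L$. I therefore expect the zeta-transform (equivalently Kronecker-product) observation to be the crux, since it replaces an $8\times 8$ determinant computation by a one-line structural argument valid for every prime.
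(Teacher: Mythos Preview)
Your proof is correct, and in fact it is considerably more careful than the paper's own treatment: the paper states the theorem immediately after defining $\phi_L$ with only the one-line justification ``By linearity of the map $\phi_L$ we obtain the following theorem,'' giving no further argument. Your identification of $\phi_L$ with the zeta transform of the Boolean lattice on $\{u,v,w\}$ (equivalently, the threefold Kronecker product) is a clean way to establish bijectivity uniformly in $p$, which the paper simply takes for granted. One tiny cosmetic point: your two invertibility arguments use different orderings, so the matrix is lower unitriangular (determinant $1$) in one ordering and the Kronecker cube of $\left(\begin{smallmatrix}0&1\\1&1\end{smallmatrix}\right)$ (determinant $(-1)^3$) in another; both of course yield invertibility, but calling them ``equivalent'' slightly blurs this. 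That aside, nothing is missing.
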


\section{The structures of cyclic codes over the ring $R_{u^2, v^2,w^2, p}$}
 Let $R_{u^2,v^2,w^2,p} = \F_p[u, v, w]\textfractionsolidus \langle u^2,v^2,w^2,uv-vu,wv-vw,uw-wu \rangle$, where $p$ is a prime number and $n$ is a positive integer. We can write $R_{u^2,v^2,w^2,p}$ as $R_{u^2,v^2,w^2,p}= R_{u^2, v^2,p} + wR_{u^2, v^2,p}, w^2=0$, where $R_{u^2, v^2,p}=\F_p + u\F_p +v\F_p + uv\F_p$ and $u^2=0,v^2=0$. Let $R_{u^2,v^2,w^2,p,n}=R_{u^2,v^2,w^2,p}[x]\textfractionsolidus \langle x^n-1\rangle$. Let $C$ be a cyclic code of length $n$ over the ring $R_{u^2,v^2,w^2,p}$. We can also consider $C$ as an ideal in the ring $R_{u^2,v^2,w^2,p,n}$. We define the map $\psi : R_{u^2,v^2,w^2,p} \rightarrow R_{u^2,v^2,p}$ by $\psi(\alpha + w \beta) = \alpha$, where $\alpha, \beta \in R_{u^2,v^2,p}$. Clearly the map $\psi$ is a surjective ring homomorphism. Let $R_{u^2,v^2,p,n}=R_{u^2,v^2,p}[x]\textfractionsolidus \langle x^n-1\rangle$. We extend this homomorphism to a homomorphism $\phi$ from $C$ to the ring $R_{u^2,v^2,p,n}$ 
defined by
\begin{equation} \label{surj-hom}
\phi (c_0+c_1x+\cdots+c_{n-1}x^{n-1})=\psi(c_0)+\psi(c_1)x+\cdots+\psi(c_{n-1})x^{n-1},
\end{equation}
where $c_i \in R_{u^2,v^2,w^2,p}$. Let $J=\{r(x)\in R_{u^2,v^2,p,n}[x]: wr(x) \in \text{ker}\phi\}$. We see that $J$ is an ideal of $R_{u^2,v^2,p,n}$. Hence, we can consider $J$ as a cyclic code over $R_{u^2,v^2,p}$. We know from Theorem 3.1 of \cite{KGP15} that any ideal of $R_{u^2,v^2,p,n}$ is of the form $\langle g(x)+up_{1}(x)+vq_{1}(x)+uvr_{1}(x),ua_1(x)+vq_{2}(x)+uvr_{2},va_2(x)+uvr_{3}(x),uva_3(x)\rangle$. Now we assume that $B_1 =  g(x)+up_{1}(x)+vq_{1}(x)+uvr_{1}(x), B_2 = ua_1(x)+vq_{2}(x)+uvr_{2}, B_3 = va_2(x)+uvr_{3}(x)$, $B_{4} =uva_3(x)$. So $J = \langle B_1, B_2,B_3,B_4\rangle$. Therefore, we can write $\text{ker}\phi=\langle wB_1, wB_2,wB_3,wB_4\rangle$. Since $\phi$ is a surjective homomorphism, the image $\text{Im}\phi$ is an ideal of $R_{u^2,v^2,p,n}$. Hence, $\text{Im}\phi$ is a cyclic code over $R_{u^2,v^2,p}$. Again we can write $\text{Im}\phi$ as above. That is, $\text{Im}\phi=\langle B_1', B_2',B_3',B_4'\rangle$. Therefore, the code $C$ over the ring $R_{u^2,v^2,w^2,p}$ can be 
written as $C=\langle A_1, A_2, \cdots, A_{8}\rangle$, where, $A_i$'s are defined as follows:\\

\noindent
$A_1=f_1(x)+uf_{1,2}(x)+vf_{1,3}(x)+uvf_{1,4}(x)+wf_{1,5}(x)+uwf_{1,6}(x)+vwf_{1,7}(x)+uvwf_{1,8}(x)$, \\
$A_2=uf_2(x)+vf_{2,3}(x)+uvf_{2,4}(x)+wf_{2,5}(x)+uwf_{2,6}(x)+vwf_{2,7}(x)+uvwf_{2,8}(x)$,\\
$A_3=vf_3(x)+uvf_{3,4}(x)+wf_{3,5}(x)+uwf_{3,6}(x)+vwf_{3,7}(x)+uvwf_{3,8}(x)$, \\
$A_4=uvf_4(x)+wf_{4,5}(x)+uwf_{4,6}(x)+vwf_{4,7}(x)+uvwf_{4,8}(x)$, \\
$A_5=wf_5(x)+uwf_{5,6}(x)+vwf_{5,7}(x)+uvwf_{5,8}(x)$, \\
$A_6=uwf_6(x)+vwf_{6,7}(x)+uvwf_{6,8}(x)$, \\
$A_7=vwf_7(x)+uvwf_{7,8}(x)$, \\
$A_8=uvwf_8(x)$. \\
Throughout this paper we use $A_1, A_2, \cdots, A_{8}$ for above polynomials.\\

For an ideal $C$ of the ring $R_{u^2, v^2, w^2, p,n}=R_{u^2, v^2, w^2, p}[x]/\langle x^n-1\rangle$, we define the residue and the torsion of the ideal $C$ as
\begin{align*}
&\text{Res}(C)=\{a\in R_{u^2,v^2,p,n}|~\exists ~b\in R_{u^2,v^2,p,n}: a+wb\in C\} ~\text{and}\\
&\text{Tor}(C)=\{a\in R_{u^2,v^2,p,n}|~wa\in C\}
\end{align*}
It can be easily shown that when $C$ is an ideal of $R_{u^2,v^2,w^2,p,n}$, $\text{Res}(C)$ and $\text{Tor}(C)$ both are ideals of $R_{u^2,v^2,p,n}$. And also it is easy to show that $\text{Res}(C)=\text{Im}\phi$ and $\text{Tor}(C)=J$. Now we define eight ideals associated to C.
\begin{align}
&C_1=\text{Res}(\text{Res}(\text{Res}(C))) = C ~\text{mod} \langle u,v,w\rangle=\langle f_1(x)\rangle\\ 
&C_2=\text{Tor}(\text{Res}(\text{Res}(C))) = \{f(x)\in \ \F_p[x]~|~uf(x)\in C ~\text{mod}~\langle v,w\rangle\}=\langle f_2(x)\rangle\\
&C_3=\text{Res}(\text{Tor}(\text{Res}(C))) = \{f(x)\in \ \F_p[x]~|~vf(x)\in C ~\text{mod}~ \langle uv,w\rangle\}=\langle f_3(x)\rangle\\
&C_4=\text{Tor}(\text{Tor}(\text{Res}(C))) = \{f(x)\in \ \F_p[x]~|~uvf(x)\in C ~\text{mod} ~\langle w\rangle\}=\langle f_4(x)\rangle\\
&C_5=\text{Res}(\text{Res}(\text{Tor}(C))) = \{f(x)\in \ \F_p[x]~|~wf(x)\in C ~\text{mod} ~\langle uw,vw\rangle\}=\langle f_5(x)\rangle\\
&C_6=\text{Tor}(\text{Res}(\text{Tor}(C))) = \{f(x)\in \ \F_p[x]~|~uwf(x)\in C ~\text{mod} ~\langle vw\rangle\}=\langle f_6(x)\rangle\\
&C_7=\text{Res}(\text{Tor}(\text{Tor}(C))) = \{f(x)\in \ \F_p[x]~|~vwf(x)\in C ~\text{mod} ~ \langle uvw\rangle\}=\langle f_7(x)\rangle\\
&C_8=\text{Tor}(\text{Tor}(\text{Tor}(C))) = \{f(x)\in \ \F_p[x]~|~uvwf(x)\in C\}=\langle f_8(x)\rangle. \label{c8}
\end{align}
These are ideals of $\ \F_p[x]/\langle x^n-1\rangle$, hence principal ideals. Throughout this paper we use $C_1, ~C_2, \cdots, C_{8}$ for above ideals.
\begin{theorem} \label{unique}
Any ideal $C$ of the ring $R_{u^2, v^2, w^2, p,n}$ is uniquely generated by the polynomials
$A_1, A_2, \cdots, A_8$ with $f_{i,j}(x)=0$ or $deg(f_{i,j}(x)) < deg(f_j(x))$, where $A_i, f_i$ and $f_{i,j}$ are defined as above.
\end{theorem}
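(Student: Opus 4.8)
The statement has two independent halves: first, that the coefficient polynomials $f_{i,j}$ can be \emph{chosen} to meet the degree bounds, and second, that once these bounds are imposed the entire system $A_1,\dots,A_8$ is \emph{forced} by $C$. I would prove these separately. Throughout write $m_1,\dots,m_8$ for the monomials $1,u,v,uv,w,uw,vw,uvw$, so that $A_i$ is supported on $m_i,m_{i+1},\dots,m_8$, with $m_i$-coefficient $f_i$ and $m_j$-coefficient $f_{i,j}$ for $j>i$. Recall that each $f_j$ is the monic generator of the principal ideal $C_j\subseteq\F_p[x]/\langle x^n-1\rangle$, hence a monic divisor of $x^n-1$; in particular its leading coefficient is a unit, so $f_j$ is regular in $R_{u^2,v^2,w^2,p}[x]$ by Proposition~\ref{regular-poly}, and the division algorithm recorded above (for polynomials over finite commutative local rings) applies with divisor $f_j$.

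\textbf{Achieving the degree conditions.} Given the triangular generators produced in the text, I would reduce $A_i$ one coefficient at a time. Dividing $f_{i,j}$ by $f_j$ in $\F_p[x]$ gives $f_{i,j}=q f_j+r$ with $r=0$ or $\deg r<\deg f_j$; replacing $A_i$ by $A_i-qA_j$ then changes the $m_j$-coefficient of $A_i$ to the reduced remainder $r$. Since $A_j$ is supported only on $m_j,\dots,m_8$, this elementary operation leaves untouched every coefficient of $A_i$ at a monomial below $m_j$ (in particular the leading coefficient $f_i$) and alters only coefficients at monomials strictly above $m_j$. Hence, for each fixed $i$, performing these divisions for $j=i+1,i+2,\dots,8$ in \emph{increasing} order reduces the $f_{i,j}$ in turn, any collateral change to a higher coefficient being cleaned up at a subsequent step; letting $i$ run from $1$ to $7$ yields generators satisfying all the bounds. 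As every move is of the form $A_i\mapsto A_i-qA_j$ with $j\neq i$ inside the ideal, the generated ideal remains exactly $C$.

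\textbf{Uniqueness.} The leading polynomials $f_i$ are determined a priori: each $C_i$ is an intrinsic ideal of $\F_p[x]/\langle x^n-1\rangle$ built from $C$ by the residue and torsion operations, so it has a unique monic generator, namely $f_i$. Now let $A_1,\dots,A_8$ and $A_1',\dots,A_8'$ be two reduced systems generating $C$; they share the same $f_i$. Fix $i$ and put $D_i=A_i-A_i'=\sum_{j>i}m_j\,h_{i,j}$, where $h_{i,j}=f_{i,j}-f_{i,j}'$ is $0$ or of degree $<\deg f_j$, and note $D_i\in C$. If $D_i\neq0$, let $j_0$ be the least index with $h_{i,j_0}\neq0$. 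I would reduce $C$ modulo exactly the ideal occurring in the definition of $C_{j_0}$ (for instance $\langle v,w\rangle$ when $j_0=2$, $\langle uv,w\rangle$ when $j_0=3$, and so on through $\langle uvw\rangle$ when $j_0=7$). By minimality of $j_0$ the components of $D_i$ below $m_{j_0}$ vanish, and the chosen ideal annihilates precisely the monomials strictly above $m_{j_0}$ while fixing $m_{j_0}$, so the image of $D_i$ is $m_{j_0}h_{i,j_0}$. The defining property of $C_{j_0}=\langle f_{j_0}\rangle$ then gives $h_{i,j_0}\in\langle f_{j_0}\rangle$; but every nonzero element of $\langle f_{j_0}\rangle$ has degree at least $\deg f_{j_0}$ (as $f_{j_0}\mid x^n-1$), contradicting $\deg h_{i,j_0}<\deg f_{j_0}$. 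Hence $D_i=0$ for all $i$, i.e.\ $f_{i,j}=f_{i,j}'$ throughout.

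\textbf{Main obstacle.} The genuinely delicate point is the monomial bookkeeping that makes the two halves work. In the existence half one must order the divisions correctly ($j$ increasing within each $i$) so that reducing one coefficient never spoils an already reduced one. In the uniqueness half one must check, case by case on $j_0$, that the ideal defining $C_{j_0}$ kills every monomial above $m_{j_0}$ yet preserves $m_{j_0}$ itself; this compatibility is exactly what collapses $D_i$ to the single term $m_{j_0}h_{i,j_0}$ and lets the degree comparison against the minimal generator $f_{j_0}$ close the argument.
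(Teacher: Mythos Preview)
Your proposal is correct and follows essentially the same approach as the paper: for existence you successively subtract $q(x)A_j$ from $A_i$ with $j$ increasing (exactly the paper's computation, which it writes out explicitly for $i=1$), and for uniqueness you form the difference $A_i-A_i'$, observe that its leading nonzero coefficient lands in the appropriate $C_{j_0}=\langle f_{j_0}\rangle$, and conclude by the degree bound. Your monomial bookkeeping with $m_1,\dots,m_8$ and the minimal-index argument are slightly more streamlined than the paper's coefficient-by-coefficient walkthrough, but the underlying ideas are identical.
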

\begin{proof}
 All these generators are chosen in such a way that all satisfies one of the conditions, i.e., either $ f_{i,j}(x)=0 $ or $\text{deg}(f_{i,j}(x))< \text{deg}(f_j(x))$. We only prove these conditions for $i=1$ and $2 \leq j \leq 8$. All other can be shown in a similar way.
Let $A_1(x)\neq 0$ ~and~ $\text{deg}(f_{1,2}(x))$ $ \geq \text{deg}(f_2(x))$. Then by dividing $f_{1,2}(x)$ by ~$ f_2(x)$, we   have $f_{1,2}(x)=a_1(x)f_2(x)+r_1(x)$, where $\text{deg}(r_1(x))<\text{deg}(f_2(x))$ or $r_1(x)=0$.
Now $A_1(x)-a_1(x)A_2(x)=f_1(x)+ur_1(x)+v(f_{1,3}(x)-a_1(x)f_{2,3}(x))+uv(f_{1,4}(x)-a_1(x)f_{2,4}(x))+w(f_{1,5}(x)-a_1(x)f_{2,5}(x))+uw(f_{1,6}(x)-a_1(x)f_{2,6}(x))+vw(f_{1,7}(x)-a_1(x)f_{2,7}(x))+uvw$ $(f_{1,8}(x)-a_1(x)f_{2,8}(x))\in C$.
If $\text{deg}(f_{1,3}(x)-a_1(x)f_{2,3}(x))$ $ \geq$ $ \text{deg}(f_3(x))$. Then  by  division  algorithm,  we  have
     $f_{1,3}(x)-a_1(x)f_{2,3}(x)=a_2(x)f_3(x)+r_2(x)$ where $\text{deg}(r_2(x))<\text{deg}(f_3(x))$ or $r_2(x)=0$.
Now $A_1(x)-a_1(x)A_2(x)-a_2(x)A_3(x)=f_1(x)+ur_1(x)+vr_2(x)+uv(f_{1,4}(x)-a_1(x)f_{2,4}(x)-a_2(x)f_{3,4}(x))+w(f_{1,5}(x)-a_1(x)f_{2,5}(x)-a_2(x)f_{3,5}(x))+uw(f_{1,6}(x)-a_1(x)f_{2,6}(x)-a_1(x)f_{3,6}(x))+vw(f_{1,7}(x)-a_1(x)f_{2,7}(x)-a_2(x)f_{3,7}(x))+uvw(f_{1,8}(x)-a_1(x)f_{2,8}(x)-a_2(x)f_{3,8}(x))\in C$.
If $\text{deg}(f_{1,4}(x)-a_1(x)f_{2,4}(x)-a_2(x)f_{3,4}(x))$ $\geq \text{deg}(f_4(x))$. Then again by  division  algorithm,  we have
   $f_{1,4}(x)-a_1(x)f_{2,4}(x)-a_2(x)f_{3,4}(x)$ = $a_3(x)f_4(x)$+$r_3(x)$ where $\text{deg}(r_3(x))<\text{deg}(f_4(x))$ or $r_3(x)=0$.
Now $A_1(x)-a_1(x)A_2(x)-a_2(x)A_3(x)-a_3(x)A_4(x)=f_1(x)+ur_1(x)+vr_2(x)+uvr_3(x))+w(f_{1,5}(x)-a_1(x)f_{2,5}(x)-a_2(x)f_{3,5}(x)-a_3(x)f_{4,5}(x))+uw(f_{1,6}(x)-a_1(x)f_{2,6}(x)-a_2(x)f_{3,6}(x)-a_3(x)f_{4,6}(x))+vw(f_{1,7}(x)-a_1(x)f_{2,7}(x)-a_2(x)f_{3,7}(x)-a_3(x)f_{4,7}(x))+uvw(f_{1,8}(x)-a_1(x)f_{2,8}(x)-a_2(x)f_{3,8}(x)-a_3(x)f_{4,8}(x))\in C$. If we continue this process, we get the polynomial $A_1(x)-a_1(x)A_2(x)-a_2(x)A_3(x)-a_3(x)A_4(x)-a_4(x)A_5(x)-a_5(x)A_6(x)-a_6(x)A_7(x)-a_7(x)A_8(x)=f_1(x)+ur_1(x)+vr_2(x)+uvr_3(x)+wr_4(x)+uwr_5(x)+vwr_6(x)+uvwr_7(x)\in C$, which satisfies the required properties of the theorem and also the polynomial $A_1$ can be replaced by this polynomial. Now we have to prove that the polynomials $A_i$'s are unique. Here again, we prove the uniqueness only for the polynomial $A_1$. Others are similar. If possible, let $A_1=f_1(x)+uf_{1,2}(x)+vf_{1,3}(x)+uvf_{1,4}(x)+wf_{1,5}(x)+uwf_{1,6}(x)+vwf_{1,7}(x)+uvwf_{1,8}(x)$ and $B_1=f_1(x)+uf_{1,2}'(x)+vf_{1,3}'(x)+uvf_{1,4}'(x)
+wf_{1,5}'(x)+uwf_{1,6}'(x)+vwf_{1,7}'(x)+uvwf_{1,8}'(x)$ be two polynomials with the same properties in $C$. Hence, $A_1-B_1=u(f_{1,2}(x)-f'_{1,2}(x))+v(f_{1,3}(x)-f'_{1,3}(x))+uv(f_{1,4}(x)-f'_{1,4}(x))+w(f_{1,5}(x)-f'_{1,5}(x))+uw(f_{1,6}(x)-f'_{1,6}(x))+vw(f_{1,7}(x)-f'_{1,7}(x))+uvw(f_{1,8}(x)-f'_{1,8}(x))$. We have $A_1-B_1\in C$ which implies that $f_{1,2}(x)-f'_{1,2}(x)\in C_2=\langle f_2(x)\rangle$. Previously, we have proved that the degrees of both $f_{1,2}(x)$ and $f'_{1,2}(x)$ are less than degree of $f_2(x)$. Hence, $\text{deg}(f_{1,2}(x)-f'_{1,2}(x))< \text{deg}(f_2(x))$. But $f_2(x)$ is the minimum degree polynomial in $C_2$, which implies that $f_{1,2}(x)-f'_{1,2}(x)=0$. This gives, $f_{1,2}(x)=f'_{1,2}(x)$. Now $A_1-B_1=v(f_{1,3}(x)-f'_{1,3}(x))+uv(f_{1,4}(x)-f'_{1,4}(x))+w(f_{1,5}(x)-f'_{1,5}(x))+uw(f_{1,6}(x)-f'_{1,6}(x))+vw(f_{1,7}(x)-f'_{1,7}(x))+uvw(f_{1,8}(x)-f'_{1,8}(x)))$. We have $A_1-B_1\in C$ which implies that $f_{1,3}(x)-f'_{1,3}(x)\in C_3=\langle g_3(x)\rangle$. Again, we have already 
proved that the degrees of $f_{1,3}(x)$ and $f'_{1,3}(x)$ are less than degree of $f_3(x)$. Hence, $\text{deg}(f_{1,3}(x)-f'_{1,3}(x))< \text{deg}(f_3(x))$, which implies that $f_{1,3}(x)-f'_{1,3}(x)=0$. This gives $f_{1,3}(x)=f'_{1,3}(x)$. Similarly, we can show that $f_{1,i}(x)=f'_{1,i}(x)$ for all $4\leq i\leq 8$. Thus, $A_1-B_1=0$. Hence, $A_1$ is unique.
\end{proof}
\begin{theorem} \label{properties}
 Let $C=\langle A_1, A_2, \cdots, A_8\rangle$ be an ideal of the ring $R_{u^2, v^2, w^2, p,n}$. Then we must have
\begin{enumerate}[{\rm (1)}]
\item $f_8(x)|f_i(x)$,  {\rm for} $1 \leq i\leq 7$;  $~~~f_j(x)|f_1(x)|(x^n-1)$,  {\rm for} $2 \leq j\leq 7$; 
\item $f_4(x)|f_2(x)$;$~~~f_4(x)|f_3(x)$;$~~~f_6(x)|f_5(x)$;
$~~~f_6(x)|f_2(x)$;$~~~f_7(x)|f_5(x)$;$~~~f_7(x)|f_3(x)~$;

\item$f_{i+1}(x)|f_{i,i+1}(x)\left(\frac{x^n-1}{f_{i}(x)}\right)$, {\rm for} $1 \leq i\leq 7$; 
\item${\rm{For ~ a ~ fix}} ~ j,  1 \leq j \leq 7, f_{i+j}(x)|\frac{x^n-1}{f_i(x)}\frac{x^n-1}{f_{i+1}(x)} \cdots \frac{x^n-1}{f_{i+j-1}(x)}f_{i,i+j}(x), ~ {\rm{for}} ~ 1 \leq i \leq 8-j$;
\item$f_i(x)|\frac{x^n-1}{f_{i-2}(x)}\left(f_{i-2,i}(x)-\frac{f_{i-2,i-1}(x)}{f_i-1(x)}f_{i-1,i}(x)\right)$, {\rm for} $3 \leq i\leq 8$; 
\item$f_i(x)|\frac{x^n-1}{f_{i-3}(x)}\left(f_{i-3,i}(x)-\frac{f_{i-3,i-2}(x)}{f_{i-2}(x)}f_{i-2,i}(x)-\left(\frac{f_{i-3,i-1}(x)-\frac{f_{i-3,i-2}(x)}{f_i-2(x)}f_{i-2,i-1}(x)}{f_{i-1}(x)}f_{i-1,i}(x)\right)\right)$;\\{\rm for} $4 \leq i\leq 8$;
\item$f_i(x)|\frac{x^n-1}{f_{i-4}(x)}\left(f_{i-4,i}(x)-\frac{f_{i-4,i-3}(x)}{f_{i-3}(x)}f_{i-3,i}(x)-Af_{i-2,i}(x)-Bf_{i-1,i}(x)\right)$; {\rm for} $i\in{(5,6,7,8)};$ $~~~~~~~~~~~$\\
{\rm where} $~~A=\left(\frac{f_{i-4,i-2}(x)-\frac{f_{i-4,i-3}(x)}{f_{i-3}(x)}f_{i-3,i-2}(x)}{f_{i-2}(x)}\right)$ {\rm and}  $B=\left(\frac{f_{i-4,i-1}(x)-\frac{f_{i-4,i-3}(x)}{f_{i-3}(x)}f_{i-3,i-1}(x)-Af_{i-2,i-1}(x)}{f_{i-1}(x)}\right)$. 
\item$f_i(x)|\frac{x^n-1}{f_{i-5}(x)}\left(f_{i-5,i}(x)-\frac{f_{i-5,i-4}(x)}{f_{i-4}(x)}f_{i-4,i}(x)-Af_{i-3,i}(x)-Bf_{i-2,i}(x)-Df_{i-1,i}(x)\right); $\\
 {\rm for} $i\in{(6,7,8)}$; 
{\rm where} $A=\left(\frac{f_{i-5,i-3}(x)-\frac{f_{i-5,i-4}(x)}{f_{i-4}(x)}f_{i-4,i-3}(x)}{f_{i-3}(x)}\right)$,\\
$B=\left(\frac{f_{i-5,i-2}(x)-\frac{f_{i-5,i-4}(x)}{f_{i-4}(x)}f_{i-4,i-2}(x)-Af_{i-3,i-2}(x)}{f_{i-2}(x)}\right)$ {\rm and}  \\
$D=\left(\frac{f_{i-5,i-1}(x)-\frac{f_{i-5,i-4}(x)}{f_{i-4}(x)}f_{i-4,i-1}(x)-Af_{i-3,i-1}(x)-Bf_{i-2,i-1}(x)}{f_{i-1}}\right)$.
\item$f_i(x)|\frac{x^n-1}{f_{i-6}(x)}\left(f_{i-6,i}(x)-Af_{i-5,i}(x)-Bf_{i-4,i}(x)-Df_{i-3,i}(x)-Ef_{i-2,i}(x)-Ff_{i-1,i}(x)\right)$ {\rm for} $i\in{(7,8)}$;\\
{\rm where} $A=\left(\frac{f_{i-6,i-5}(x)}{f_{i-5}(x)}\right)$, $B=\left(\frac{ f_{i-6,i-4}(x)-Af_{i-5,i-4}(x)}{f_{i-4}(x)}\right)$, $ D=\left(\frac{f_{i-6,i-3}(x)-Af_{i-5,i-3}(x)-Bf_{i-4,i-3}(x)}{f_{i-3}(x)}\right)$,\\
$E=\left(\frac{f_{i-6,i-2}(x)-Af_{i-5,i-2}(x)-Bf_{i-4,i-2}(x)-Df_{i-3,i-2}(x)}{f_{i-2}(x)}\right)$ {\rm and}  \\
$F=\left(\frac{f_{i-6,i-1}(x)-Af_{i-5,i-1}(x)-Bf_{i-4,i-1}(x)-Df_{i-3,i-1}(x)-Ef_{i-2,i-1}(x)}{f_{i-1}(x)}\right)$.
\item$f_8(x)|\frac{x^n-1}{f_{1}(x)}\left(f_{1,8}(x)-Af_{2,8}(x)-Bf_{3,8}(x)-Df_{4,8}(x)-Ef_{5,8}(x)-Ff_{6,8}(x)-Gf_{7,8}(x)\right),$
{\rm where}\\
$A=\left(\frac{f_{1,2}(x)}{f_2(x)}\right) ,$ $~~B=\left(f_{1,3}(x)-A\frac{f_{2,3}(x)}{f_{3}(x)}\right)$,$ D=\left(\frac{f_{1,4}(x)-Af_{2,4}(x)-Bf_{3,4}(x)}{f_{4}(x)}\right),$\\
$E=\left(\frac{f_{1,5}(x)-Af_{2,5}(x)-Bf_{3,5}(x)-Df_{4,5}(x)}{f_{5}(x)}\right)$, $F=\left(\frac{f_{1,6}(x)-Af_{2,6}(x)-Bf_{3,6}(x)-Df_{4,6}(x)-Ef_{5,6}(x)}{f_{6}(x)}\right)$ {\rm and}  
$G=\left(\frac{f_{1,7}(x)-Af_{2,7}(x)-Bf_{3,7}(x)-Df_{4,7}(x)-Ef_{5,7}(x)-Ff_{6,7}(x)}{f_{7}(x)}\right).$

\item$f_i(x)|f_{i-2,i-1}(x)$ {\rm for} $i \in{(4,6,8)}$;
\item $f_i(x)|\left(f_{1,2}(x)-\frac{f_1(x)}{f_{i-1}(x)}f_{i-1,i}(x)\right)$, {\rm for} $ i \in{(4,6,8)}$; 
\item$f_i(x)|\left(f_{i-5,i-4}(x)-\frac{f_{i-5}(x)}{f_{i-1}(x)}f_{i-1,i}(x)\right)$, {\rm for} $i \in{(7,8)}$; 
 
\item$ f_i(x)|\left( f_{i-6,i-4}(x)-\frac{f_{i-6}(x)}{f_{i-2}(x)}f_{i-2,i}(x)+\frac{\left(f_{i-6,i-5}(x)-\frac{f_{i-6}(x)}{f_{i-2}(x)}f_{i-2,i-1}(x)\right)}{f_{i-1}(x)}f_{i-1,i}(x)\right),$ {\rm for} $i\in{(7,8)}$;
\item$f_7(x)|f_{4,5}(x)$ {\rm and}  $f_7(x)|f_{3,5}(x);$
\item$f_8(x)|f_{2,5}(x);$
\item$f_8(x)|\left(f_{3,6}(x)-\frac{f_{3,5}(x)}{f_7(x)}f_{7,8}(x)\right);$
\item$f_8(x)|\left(f_{4,6}(x)-\frac{f_{4,5}(x)}{f_7(x)}f_{7,8}(x)\right);$
\item$f_8(x)|\left(f_{5,6}(x)-\frac{f_5(x)}{f_7(x)}f_{7,8}(x)\right);$
\item$f_8(x)|\left(f_{1,4}(x)-\frac{f_{1}(x)}{f_5(x)}f_{5,8}(x)-Af_{6,8}(x)-Bf_{7,8}(x)\right),$\\
{\rm where} $A=\left(\frac{f_{1,2}(x)-\frac{f_{1}(x)}{f_5(x)}f_{5,6}(x)}{f_{6}(x)}\right)$ {\rm and}  $B=\left(\frac{f_{1,3}(x)-\frac{f_{1}(x)}{f_5(x)}f_{5,7}(x)-(A)f_{6,7}(x)}{f_{6}(x)}\right).$
\end{enumerate}
\end{theorem}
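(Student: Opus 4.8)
The plan is to derive every divisibility in the list by the same two-move procedure, applied to the unique generators $A_1,\dots,A_8$ supplied by Theorem \ref{unique}. The first move is \emph{monomial multiplication}: multiplying a generator $A_k$ by one of $u,v,w,uv,uw,vw,uvw$ and invoking $u^2=v^2=w^2=0$ collapses most terms and promotes a chosen coefficient $f_{k,\ell}$ (or $f_k$) to the leading position of a new element of $C$. The second move is \emph{reduction}: after arranging, if necessary, that all strictly lower monomials have been cancelled, one reduces the element modulo the appropriate ideal (one of $\langle u,v,w\rangle,\langle v,w\rangle,\langle uv,w\rangle,\langle w\rangle,\langle uw,vw\rangle,\langle vw\rangle,\langle uvw\rangle$) so that its surviving leading term lands in one of the principal ideals $C_1,\dots,C_8=\langle f_1\rangle,\dots,\langle f_8\rangle$; membership there is exactly a divisibility by the corresponding $f_k$.

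First I would dispatch the ``pure'' divisibilities among the $f_i$, namely items (1), (2), together with (11), (15), (16) and the single-term parts of the others. Each follows from one monomial multiplication with no cancellation: for instance $uvwA_1=uvwf_1$ gives $f_8\mid f_1$; reducing $vA_2=uvf_2+vwf_{2,5}+uvwf_{2,6}$ modulo $\langle w\rangle$ gives $f_4\mid f_2$; and reducing $uA_2=uvf_{2,3}+uwf_{2,5}+uvwf_{2,7}$ modulo $\langle w\rangle$ gives $f_4\mid f_{2,3}$. Since each $C_k$ is an ideal of $\F_p[x]/\langle x^n-1\rangle$ its generator divides $x^n-1$, which yields $f_1\mid(x^n-1)$ and hence $f_i\mid(x^n-1)$ for all $i$ through the containments just obtained; this is precisely what lets me form the polynomials $\tfrac{x^n-1}{f_i}$ in the remaining parts.

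Next come the parts carrying a factor $\tfrac{x^n-1}{f_i}$. The key observation is that $\tfrac{x^n-1}{f_i}A_i$ has leading term $(\text{monomial})\cdot(x^n-1)\equiv 0 \pmod{x^n-1}$, so that element of $C$ begins one position lower; reducing it gives (3), i.e. $f_{i+1}\mid f_{i,i+1}\tfrac{x^n-1}{f_i}$. Iterating this — multiplying instead by $\tfrac{x^n-1}{f_i}\tfrac{x^n-1}{f_{i+1}}\cdots\tfrac{x^n-1}{f_{i+j-1}}$, where each successive factor annihilates the current leading coefficient precisely because of the divisibility established one step earlier — gives (4) by induction on $j$. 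For items (5)--(10), (12)--(14) and (17)--(21) I would combine the two moves: multiply $A_{i-k}$ by $\tfrac{x^n-1}{f_{i-k}}$ (or by a monomial), subtract polynomial multiples of $A_{i-k+1},\dots,A_{i-1}$ to cancel the intermediate positions one at a time, and finally reduce into $C_i$. The multipliers are exactly the nested quotients written $A,B,D,\dots$ in the statement; although displayed with formal fractions such as $\tfrac{f_{i-2,i-1}}{f_{i-1}}$, each becomes a genuine element of $\F_p[x]$ once the front factor $\tfrac{x^n-1}{f_{i-k}}$ is distributed, the integrality being guaranteed by a divisibility proved at an earlier stage (for example $\tfrac{x^n-1}{f_1}\cdot\tfrac{f_{1,2}}{f_2}f_{2,3}$ is integral by (3), and in (17) the multiplier $\tfrac{f_{3,5}}{f_7}$ is integral by (15)).

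The main obstacle is organizational rather than conceptual: the twenty-one conditions must be arranged into a single well-founded induction so that, at the moment each cancellation is carried out, the polynomiality of every quotient used has already been secured by a previously proved item — (5) needs (3), (13) needs (2), (17) needs (15), and the long chains (8)--(10) need (4) together with (5)--(7). Verifying that each nested coefficient $A,B,D,E,F,G$ occurring in (8)--(10) is integral, and that after all cancellations the remaining lower monomials genuinely lie inside the ideal one reduces by — so that the surviving term really does represent a class in $C_i$ — is the delicate bookkeeping the proof must execute explicitly; everything else is a mechanical repetition of the two moves described above.
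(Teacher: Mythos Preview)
Your proposal is correct and follows essentially the same route as the paper's own proof: for each item the paper chooses the same element (a monomial times $A_k$, or $\tfrac{x^n-1}{f_k}A_k$, possibly minus polynomial multiples of later $A_j$'s), reduces modulo the appropriate ideal among $\langle u,v,w\rangle,\langle v,w\rangle,\dots,\langle uvw\rangle$, and reads off membership in the corresponding $C_i=\langle f_i\rangle$. Your two-move framework (monomial multiplication, then reduction) and your remark that the nested fractions $A,B,D,\dots$ become genuine polynomials once the front factor $\tfrac{x^n-1}{f_{i-k}}$ is distributed --- with integrality guaranteed by previously proved items --- is exactly what the paper does implicitly, and if anything you are slightly more careful than the paper on this point.
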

\begin{proof}
\begin{enumerate}
\item We have $vwA_2\in C$. Therefore, $uvwf_2(x) \in C$. This gives, $f_2(x)\in C_8=\langle f_8(x)\rangle$. Thus, $f_8(x)|f_2(x)$. Similarly, if we take $uwA_3$, $wA_4$, $uvA_5$, $vA_6$ and $uA_7$ we get $f_8(x)|f_i(x)$,  for $3 \leq i\leq 7$.
\item  We have $vA_2\in C$. Therefore, $uvf_2(x) \in C~mod~w$. This gives, $f_2(x)\in C_4=\langle f_4(x)\rangle$. Thus, $f_4(x)|f_2(x)$. Similarly, if we take $uA_3$, $uA_5$,$wA_2$, $vA_5$, $wA_3$ and take mod by $ w,vw,vw,uvw,uvw$ respectively, we  get the other conditions of (2). 
\item For $1\leq i \leq 7$, we have $\frac{x^n-1}{f_i(x)}A_i\in C$. Therefore, $\frac{x^n-1}{f_i(x)}f_{i,i+1}(x)\in C_{i+1}=\langle f_{i+1}(x)\rangle$. Hence, $f_{i+1}(x)|\frac{x^n-1}{f_i(x)}f_{i,i+1}(x)$.
\item For $j=1$, Condition 4 is reduced to Condition 3. For $j=2$ and for $1 \leq i \leq 6$, we have $\frac{x^n-1}{f_i(x)}\frac{x^n-1}{f_{i+1}(x)}A_i \in C.$ This with Condition 3 gives, $\frac{x^n-1}{f_i(x)}\frac{x^n-1}{f_{i+1}(x)}f_{i,i+2} \in C_{i+2}=\langle f_{i+2}(x)\rangle$. Hence, $f_{i+2}(x)|\frac{x^n-1}{f_i(x)}\frac{x^n-1}{f_{i+1}(x)}f_{i,i+2}$. This proves the condition for $j=2$. Similarly for others value of $j$ we can prove the Condition 4.
\item For $i = 3$, we have $\left( \frac{x^n-1}{f_1(x)}A_1-\frac{x^n-1}{f_1(x)}\frac{f_{1,2}(x)}{f_2(x)}A_2\right)=v\frac{x^n-1}{f_1(x)}(f_{1,3}(x)-\frac{f_{1,2}(x)}{f_2(x)}f_{2,3}(x))+uv\frac{x^n-1}{f_1(x)}(f_{1,4}(x)-\frac{f_{1,2}(x)}{f_2(x)}f_{2,4}(x))+w\frac{x^n-1}{f_1(x)}(f_{1,5}(x)-\frac{f_{1,2}(x)}{f_2(x)}f_{2,5}(x))+uw\frac{x^n-1}{f_1(x)}(f_{1,6}(x)-\frac{f_{1,2}(x)}{f_2(x)}f_{2,6}(x))+vw\frac{x^n-1}{f_1(x)}(f_{1,7}(x)-\frac{f_{1,2}(x)}{f_2(x)}f_{2,7}(x))+uvw\frac{x^n-1}{f_1(x)}(f_{1,8}(x)-\frac{f_{1,2}(x)}{f_2(x)}f_{2,8}(x))\in C $.
 Since $v\frac{x^n-1}{f_1(x)}(f_{1,3}(x)-\frac{f_{1,2}(x)}{f_2(x)}f_{2,3}(x))\in C mod <uv,w>$. Therefore  $\frac{x^n-1}{f_1(x)}(f_{1,3}(x)-\frac{f_{1,2}(x)}{f_2(x)}f_{2,3}(x))\in C_3$ $\Rightarrow $ $f_3(x)|\left( \frac{x^n-1}{f_1(x)}(f_{1,3}(x)-\frac{f_{1,2}(x)}{f_2(x)}f_{2,3}(x))\right)$.
 Similarly we get the results for rest of the values of $i$.
\item For $i=4$, we have\\ $\left(\frac{x^n-1}{f_1(x)}A_1-\frac{x^n-1}{f_1(x)}\frac{f_{1,2}(x)}{f_2(x)}A_2+\frac{x^n-1}{f_1(x)f_3(x)}\left(f_{1,3}(x)-\frac{f_{1,2}(x)}{f_2(x)}f_{2,3}(x)\right)A_3\right)\in C$. \\Since, $uv\frac{x^n-1}{f_1(x)}\left(f_{1,4}(x)-\frac{f_{1,2}(x)}{f_2(x)}f_{2,4}(x)+\frac{\left(f_{1,3}(x)-\frac{f_{1,2}(x)}{f_2(x)}f_{2,3}(x)\right)}{f_3(x)}f_{3,4}(x)\right)\in C~mod~w$.\\Therefore, $\frac{x^n-1}{f_1(x)}\left(f_{1,4}(x)-\frac{f_{1,2}(x)}{f_2(x)}f_{2,4}(x)+\frac{\left(f_{1,3}(x)-\frac{f_{1,2}(x)}{f_2(x)}f_{2,3}(x)\right)}{f_3(x)}f_{3,4}(x)\right)\in C_4$.\\$ \Rightarrow$ $f_4(x)|\frac{x^n-1}{f_1(x)}\left(f_{1,4}(x)-\frac{f_{1,2}(x)}{f_2(x)}f_{2,4}(x)+\frac{\left(f_{1,3}(x)-\frac{f_{1,2}(x)}{f_2(x)}f_{2,3}(x)\right)}{f_3(x)}f_{3,4}(x)\right).$\\Similarly we get the results for rest of the values of $i$.
 \item  For $i=4$, we have  
$ \frac{x^n-1}{f_1(x)}\left(A_1-f_{1,2}(x)\frac{A_2}{f_2(x)}+\left(f_{1,3}(x)-\frac{f_{1,2}(x)}{f_2(x)}f_{2,3}(x)\right)\frac{A_3}{f_3(x)}\right)-\\\left(\frac{x^n-1}{f_1(x)} \left(f_{1,4}(x)-f_{1,2}(x)\frac{A_2}{f_2(x)}+\left(f_{1,3}(x)-\frac{f_{1,2}(x)}{f_2(x)}f_{2,3}(x)\right)\frac{A_3}{f_3(x)}\right)\frac{A_4}{f_4(x)}\right)\in C$.\\ Since, 
$w\frac{x^n-1}{f_1(x)}\left(f_{1,5}(x)-f_{1,2}(x)\frac{f_{2,5}(x)}{f_2(x)}+\left(f_{1,3}(x)-\frac{f_{1,2}(x)}{f_2(x)}f_{2,3}(x)\right)\frac{f_{3,5}(x)}{f_3(x)}\right)-\\\frac{x^n-1}{f_1(x)} \left(f_{1,4}(x)-f_{1,2}(x)\frac{f_{2,4}(x)}{f_2(x)}+\left(f_{1,3}(x)-\frac{f_{1,2}(x)}{f_2(x)}f_{2,3}(x)\right)\frac{f_{3,4}(x)}{f_3(x)}\right)\frac{f_{4,5}(x)}{f_4(x)}\in C mod(uw,vw).
$\\Therefore,
$ \frac{x^n-1}{f_1(x)}\left(f_{1,5}(x)-f_{1,2}(x)\frac{f_{2,5}(x)}{f_2(x)}+\left(f_{1,3}(x)-\frac{f_{1,2}(x)}{f_2(x)}f_{2,3}(x)\right)\frac{f_{3,5}(x)}{f_3(x)}\right)-\\\frac{x^n-1}{f_1(x)} \left(f_{1,4}(x)-f_{1,2}(x)\frac{f_{2,4}(x)}{f_2(x)}+\left(f_{1,3}(x)-\frac{f_{1,2}(x)}{f_2(x)}f_{2,3}(x)\right)\frac{f_{3,4}(x)}{f_3(x)}\right)\frac{f_{4,5}(x)}{f_4(x)}\in C_5.$\\Similarly we get the results for rest of the values of $i$.
\item   For  $i=6$, we have 
$~~\left(\frac{x^n-1}{f_{1}(x)}A_1-AA_2-BA_3+DA_4+EA_5\right)\in C$\\Since, 
$uw\left(\frac{x^n-1}{f_{1}(x)}f_{1,6}(x)-Af_{2,6}(x)-Bf_{3,6}(x)+Df_{4,6}(x)+Ef_{5,6}(x)\right)\in C~mod~vw$.\\
Therefore, $\frac{x^n-1}{f_{1}(x)}\left(f_{1,6}(x)-\frac{f_{1,2}(x)}{f_{2}(x)}f_{2,6}(x)-Af_{3,6}(x)-Bf_{4,6}(x)-Df_{5,6}(x)\\\right)\in C_6 $\\$\Rightarrow$ $f_6(x)|\frac{x^n-1}{f_{1}(x)}\left(f_{1,6}(x)-\frac{f_{1,2}(x)}{f_{2}(x)}f_{2,6}(x)-Af_{3,6}(x)-Bf_{4,6}(x)-Df_{5,6}(x)\right). $
where, \\$A=\frac{x^n-1}{f_{1}(x)}\frac{f_{1,2}(x)}{f_{2}(x)}, B=\frac{x^n-1}{f_{1}(x)}\left(\frac{f_{1,3}(x)-\frac{f_{1,2}(x)}{f_{2}(x)}f_{2,3}(x)}{f_{3}(x)}\right)$,
$C=\frac{x^n-1}{f_{1}(x)}\left(\frac{-f_{1,4}(x)+\frac{f_{1,2}(x)}{f_{2}(x)}f_{2,4}(x)+Af_{3,4}(x)}{f_{4}(x)}\right)$\\and
$D=\frac{x^n-1}{f_{1}(x)}\left(\frac{-f_{1,5}(x)+\frac{f_{1,2}(x)}{f_{2}(x)}f_{2,5}(x)+Af_{3,5}(x)+Bf_{4,5}(x)}{f_{5}}\right)$.\\Similarly we get the results for rest of the values of $i$.
\item For  $i=7$, we have   
$~\frac{x^n-1}{f_{1}(x)}\left(A_1-A A_2-B A_3-D A_4-E A_5-F A_6\right)\in C$. Since, \\
$vw\frac{x^n-1}{f_{1}(x)}\left(f_{1,7}(x)-A f_{2,7}(x)-B f_{3,7}(x)-D f_{4,7}(x)-E f_{5,7}(x)-F f_{6,7}(x)\right)\in Cmod~uvw$.\\Therefore,
 $\frac{x^n-1}{f_{1}(x)}\left(f_{1,7}(x)-A f_{2,7}(x)-B f_{3,7}(x)-D f_{4,7}(x)-E f_{5,7}(x)-F f_{6,7}(x)\right)\in C_7$ 
$\Rightarrow$  $f_7(x)|\frac{x^n-1}{f_{1}(x)}\left(f_{1,7}(x)-A f_{2,7}(x)-B f_{3,7}(x)-D f_{4,7}(x)-E f_{5,7}(x)-F f_{6,7}(x)\right)$.\\
where, $A=\left(\frac{f_{1,2}(x)}{f_{2}(x)}\right) ,B=\left(f_{1,3}(x)-A \frac{f_{2,3}(x)}{f_{3}(x)}\right), D=\left(\frac{f_{1,4}(x)-A f_{2,4}(x)-B f_{3,4}(x)}{f_{4}(x)}\right)$ ,\\
$E=\left(\frac{f_{1,5}(x)-A f_{2,5}(x)-B f_{3,5}(x)-D f_{4,5}(x)}{f_{5}(x)}\right)$and
$F=\left(\frac{f_{1,6}(x)-A f_{2,6}(x)-B f_{3,6}(x)-D f_{4,6}(x)-E f_{5,6}(x)}{f_{6}(x)}\right)$.\\Similarly we get the results for rest of the values of $i$.
\item We have
$\frac{x^n-1}{f_{1}(x)}\left(A_1-A A_2-B A_3-D A_4-E A_5-F A_6-G A_7\right)$\\= 
$uvw\frac{x^n-1}{f_{1}(x)}\left(f_{1,8}(x)-Af_{2,8}(x)-Bf_{3,8}(x)-Df_{4,8}(x)-Ef_{5,8}(x)-Ff_{6,8}(x)-Gf_{7,8}(x)\right)\in C$.
Therefore, $\frac{x^n-1}{f_{1}(x)}\left(f_{1,8}(x)-Af_{2,8}(x)-Bf_{3,8}(x)-Df_{4,8}(x)-Ef_{5,8}(x)-Ff_{6,8}(x)-Gf_{7,8}(x)\right)\in C_8$
$\Rightarrow$ $f_8(x)|\frac{x^n-1}{f_{1}(x)}\left(f_{1,8}(x)-Af_{2,8}(x)-Bf_{3,8}(x)-Df_{4,8}(x)-Ef_{5,8}(x)-Ff_{6,8}(x)-Gf_{7,8}(x)\right)$.
where,
$A=\left(\frac{f_{1,2}(x)}{f_2(x)}\right) $, $B=\left(f_{1,3}(x)-A\frac{f_{2,3}(x)}{f_{3}(x)}\right) $, $ D=\left(\frac{f_{1,4}(x)-Af_{2,4}(x)-Bf_{3,4}(x)}{f_{4}(x)}\right)$,\\
$E=\left(\frac{f_{1,5}(x)-Af_{2,5}(x)-Bf_{3,5}(x)-Df_{4,5}(x)}{f_{5}(x)}\right)$, $F=\left(\frac{f_{1,6}(x)-Af_{2,6}(x)-Bf_{3,6}(x)-Df_{4,6}(x)-Ef_{5,6}(x)}{f_{6}(x)}\right)$and
$G=\left(\frac{f_{1,7}(x)-Af_{2,7}(x)-Bf_{3,7}(x)-Df_{4,7}(x)-Ef_{5,7}(x)-Ff_{6,7}(x)}{f_{7}(x)}\right).$

\item For  $i=4$, we have  
$uA_3$=$uvf_{2,3}(x)+uwf_{2,5}(x)+uvwf_{2,7}(x)\in C.$ Therefore,  $uvf_{2,3}(x)\in  C~mod~w$ $\Rightarrow$ $f_{2,3}(x)\in C_4$ 
$\Rightarrow$ $f_4(x)|f_{2,3}(x)$. Similarly we get the results for rest of the values of $i$.
\item For  $i=4$, we have
 $\left(vA_1-\frac{f_{1}(x)}{f_{3}(x)}A_3\right)\in C$. Since, $uv\left(f_{1,2}(x)-\frac{f_{1}(x)}{f_{3}(x)}f_{3,4}(x)\right)\in C~mod~w$. Therefore,$\left(f_{1,2}(x)-\frac{f_{1}(x)}{f_{3}(x)}f_{3,4}(x)\right)\in C_4$
 $\Rightarrow$ $f_4(x)|\left(f_{1,2}(x)-\frac{f_1(x)}{f_3(x)}f_{3,4}(x)\right)$.\\ Similarly we get the results for rest of the values of $i$.
\item For  $i=7$, we have
$\left( wA_2-\frac{f_{2}(x)}{f_{6}(x)}A_6\right)\in C$. Since, $vw\left(f_{2,3}(x)-\frac{f_{2}(x)}{f_{6}(x)}f_{6,7}(x)\right)\in C~mod~w$. Therefore,
$f_7(x)|\left(f_{2,3}(x)-\frac{f_{2}(x)}{f_{6}(x)}f_{6,7}(x)\right)$. Similarly we get the result for $i=8$.
\item For  $i=7$, we have
$~~\left( wA_1-\frac{f_{1}(x)}{f_{5}(x)}A_5-\frac{\left(f_{1,2}(x)-\frac{f_{1}(x)}{f_{5}(x)}f_{5,6}(x)\right)}{f_{6}(x)}A_6 \right)\in C$. \\Since, 
$vw \left( f_{1,3}(x)-\frac{f_{1}(x)}{f_{5}(x)}f_{5,7}(x)-\frac{\left(f_{1,2}(x)-\frac{f_{1}(x)}{f_{5}(x)}f_{5,6}(x)\right)}{f_{6}(x)}f_{6,7}(x)\right)\in C~mod~uvw$.\\ Therefore, 
$\left( f_{1,3}(x)-\frac{f_{1}(x)}{f_{5}(x)}f_{5,7}(x)-\frac{\left(f_{1,2}(x)-\frac{f_{1}(x)}{f_{5}(x)}f_{5,6}(x)\right)}{f_{6}(x)}f_{6,7}(x)\right)\in C_7$.\\
$\Rightarrow$ $f_7(x)|\left( f_{1,3}(x)-\frac{f_{1}(x)}{f_{5}(x)}f_{5,7}(x)-\frac{\left(f_{1,2}(x)-\frac{f_{1}(x)}{f_{5}(x)}f_{5,6}(x)\right)}{f_{6}(x)}f_{6,7}(x)\right)$.\\ Similarly, we get the results for rest of the values of $i$. 
\item We have
$vA_4 \in C$. Since, $ vwf_{4,5}(x)\in C~mod~uvw $. Therefore,$~~f_{4,5}(x)\in C_7$\\
$\Rightarrow$ $f_7(x)|f_{4,5}(x)$. Similarly by taking $vA_3 $ we can show $f_7(x)|f_{3,5}(x)$.
\item We have,
$uvA_2$=$uvwf_{2,5}(x)\in C$. Therefore, $f_{2,5}(x)\in C_8$ $\Rightarrow$ 
$f_8(x)|f_{2,5}(x)$
\item We have,
$vA_3-\frac{f_{3,5}(x)}{f_7(x)}A_7$= $uvw\left(f_{3,6}(x)-\frac{f_{3,5}(x)}{f_7(x)}f_{7,8}(x)\right)\in C_8$. Therefore,\\  $\left(f_{3,6}(x)-\frac{f_{3,5}(x)}{f_7(x)}f_{7,8}(x)\right)\in C_8$ $\Rightarrow$
$f_8(x)|\left(f_{3,6}(x)-\frac{f_{3,5}(x)}{f_7(x)}f_{7,8}(x)\right)$.
\item We have
$vA_4-\frac{f_{4,5}(x)}{f_7(x)}A_7$=$uvw\left(f_{4,6}(x)-\frac{f_{4,5}(x)}{f_7(x)}f_{7,8}(x)\right)\in C$. Therefore,\\
$\left(f_{4,6}(x)-\frac{f_{4,5}(x)}{f_7(x)}f_{7,8}(x)\right)\in C_8$ $\Rightarrow$
$f_8(x)|\left(f_{4,6}(x)-\frac{f_{4,5}(x)}{f_7(x)}f_{7,8}(x)\right)$.
\item We have
$vA_5-\frac{f_{5}(x)}{f_7(x)}A_7$=$ uvw\left(f_{5,6}(x)-\frac{f_5(x)}{f_7(x)}f_{7,8}(x)\right)\in C$.\\Therefore, $\left(f_{5,6}(x)-\frac{f_5(x)}{f_7(x)}f_{7,8}(x)\right)\in C_8$ 
$\Rightarrow$ $f_8(x)|\left(f_{5,6}(x)-\frac{f_5(x)}{f_7(x)}f_{7,8}(x)\right)$.
\item We have
$wA_1-\frac{f_{1}(x)}{f_5(x)}A_5-A A_6-B A_7$= $uvw\left(f_{1,4}(x)-\frac{f_{1}(x)}{f_5(x)}f_{5,8}(x)-Af_{6,8}(x)-Bf_{7,8}(x)\right)\in C$. Therefore, 
$\left(f_{1,4}(x)-\frac{f_{1}(x)}{f_5(x)}f_{5,8}(x)-Af_{6,8}(x)-Bf_{7,8}(x)\right)\in C_8$\\
$\Rightarrow$  $f_8(x)|\left(f_{1,4}(x)-\frac{f_{1}(x)}{f_5(x)}f_{5,8}(x)-Af_{6,8}(x)-Bf_{7,8}(x)\right)$,\\
where
$A=\left(\frac{f_{1,2}(x)-\frac{f_{1}(x)}{f_5(x)}f_{5,6}(x)}{f_{6}(x)}\right)$ and $B=\left(\frac{f_{1,3}(x)-\frac{f_{1}(x)}{f_5(x)}f_{5,7}(x)-Af_{6,7}(x)}{f_{6}(x)}\right).$
\end{enumerate}
\end{proof}

\begin{theorem} 
If  $C$=$\left\langle A_1, A_2, \cdots, A_8 \right\rangle $ is a cyclic code over the ring $R_{u^2, v^2, w^2, p}$ then $C$ is a free cyclic code if and only if $f_1(x)=f_8(x)$. In this case, we have $C=\left\langle A_1\right\rangle $ and $A_1|(x^n-1)$ in $R_{u^2, v^2, w^2, p}[x]$.
\end{theorem}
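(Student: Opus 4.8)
The plan is to reduce everything to the eight layer ideals $C_1,\dots,C_8$ and to the order of $C$. The key bookkeeping fact I would record first is the size formula
$|C| = \prod_{i=1}^{8} p^{\,n-\deg f_i}$. This follows from the residue--torsion splitting used to define the $C_i$: for any ideal $X$ the map $\phi$ yields a short exact sequence $0 \to w\,\mathrm{Tor}(X) \to X \to \mathrm{Res}(X) \to 0$, and since multiplication by $w$ is injective on $R_{u^2,v^2,p,n}$ we get $|X| = |\mathrm{Res}(X)|\cdot|\mathrm{Tor}(X)|$. Applying this three times (splitting off $w$, then $v$, then $u$) decomposes $C$ into the eight principal ideals $C_i = \langle f_i\rangle$ of $\F_p[x]/\langle x^n-1\rangle$; because $f_8 \mid f_i \mid f_1 \mid (x^n-1)$ by Theorem~\ref{properties}, each $f_i$ divides $x^n-1$ and so $|C_i| = p^{\,n-\deg f_i}$, giving the product.

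For the forward direction, suppose $C$ is free, say $C \cong R_{u^2,v^2,w^2,p}^{\,k}$. I would read off $k$ from the socle. A direct check shows that the elements of $C$ killed by $u,v,w$ are exactly $\mathrm{soc}(C) = \{c \in C : uc=vc=wc=0\} = uvw\,C_8$, so $|\mathrm{soc}(C)| = |C_8| = p^{\,n-\deg f_8}$; on the other hand $\mathrm{soc}(R_{u^2,v^2,w^2,p}) = uvw\,\F_p$ has order $p$, whence a free module of rank $k$ has socle of order $p^{k}$. Thus $k = n-\deg f_8$. Comparing with the size formula, $p^{8k} = |C| = p^{\sum_i (n-\deg f_i)}$ forces $\sum_{i=1}^8 \deg f_i = 8\deg f_8$. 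Since $f_8 \mid f_i$ gives $\deg f_i \ge \deg f_8$ for every $i$, each summand must equal $\deg f_8$; as the $f_i$ are monic with $f_8 \mid f_i$, this yields $f_i = f_8$ for all $i$, in particular $f_1 = f_8$.

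For the converse, assume $f_1 = f_8$. Then $f_8 \mid f_i \mid f_1 = f_8$ forces $f_1 = f_2 = \cdots = f_8 =: f$. I would first prove $A_1 \mid (x^n-1)$ in $R_{u^2,v^2,w^2,p}[x]$. Here $A_1$ is monic of degree $\deg f$, since the remaining terms $uf_{1,2},\dots,uvwf_{1,8}$ all have degree $<\deg f$ by the normalization of Theorem~\ref{unique}; hence $A_1$ is regular, and the division algorithm gives $x^n-1 = A_1 Q + S$ with $\deg S < \deg f$. In $R_{u^2,v^2,w^2,p,n}$ we have $S \equiv -A_1 Q \in \langle A_1\rangle \subseteq C$, and I would kill $S$ layer by layer: reducing modulo $\langle u,v,w\rangle$ places its residue in $C_1=\langle f\rangle$ with degree $<\deg f$, so it vanishes; peeling off $u,v,w,uv,\dots$ in turn and using $C_i=\langle f\rangle$ at each step shows every component of $S$ lies in $\langle f\rangle$ while having degree $<\deg f$, whence $S=0$ and $A_1 \mid (x^n-1)$.

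Writing $x^n-1 = A_1 H$ with $H$ monic of degree $n-\deg f$, the ideal $\langle A_1\rangle$ is the image of the $R_{u^2,v^2,w^2,p}$-module isomorphism $g \mapsto A_1 g$ from $\F_p$-free object $R_{u^2,v^2,w^2,p}[x]/\langle H\rangle$ (regularity of $A_1$ permits the cancellation needed for injectivity), so $\langle A_1\rangle$ is free of rank $n-\deg f$ with $|\langle A_1\rangle| = p^{8(n-\deg f)}$. Since all $f_i = f$, the size formula gives $|C| = p^{8(n-\deg f)}$ as well, and from $\langle A_1\rangle \subseteq C$ I conclude $C = \langle A_1\rangle$, which is free. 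The step I expect to be the main obstacle is the careful layer-by-layer vanishing of the remainder $S$, that is, establishing $A_1 \mid (x^n-1)$, since this is precisely where the eight-level filtration and the coincidence $f_1 = \cdots = f_8$ must be used together; by contrast the socle and size counts, though essential, are routine once the splitting $|X| = |\mathrm{Res}(X)|\cdot|\mathrm{Tor}(X)|$ is in place.
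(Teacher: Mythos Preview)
Your proof is correct and takes a genuinely different route from the paper's.

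The paper argues the implication $f_1=f_8\Rightarrow C$ free by explicit generator manipulation: from the divisibility chains it gets $f_1=\cdots=f_8$, invokes the analogous result over $R_{u^2,v^2,p}$ from \cite{KGP15} to reduce to two generators $A_1,A_5$, and then shows $wA_1=A_5$ by proving $f_{1,2}=f_{5,6}$, $f_{1,3}=f_{5,7}$, $f_{1,4}=f_{5,8}$ via degree comparisons inside $C_6,C_7,C_8$. For the converse it simply asserts that a free cyclic code must equal $\langle A_1\rangle$, then uses $uvwf_8\in C$ to force $f_1=f_8$. Finally $A_1\mid(x^n-1)$ is obtained exactly as you do it, by the division algorithm and the minimality of $\deg A_1$ in $C$.

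Your approach replaces all of this by two counting invariants: the size formula $|C|=\prod_i p^{\,n-\deg f_i}$ coming from the iterated $\mathrm{Res}/\mathrm{Tor}$ splitting, and the socle identification $\mathrm{soc}(C)=uvw\,C_8$. These give both directions cleanly, and in particular your treatment of ``$C$ free $\Rightarrow f_1=f_8$'' is more transparent than the paper's, which leans on the unjustified claim that a free cyclic code is automatically principal on $A_1$. What the paper's argument buys, on the other hand, is the explicit structural information $wA_1=A_5$ and the equalities among the $f_{i,j}$, which your counting argument does not produce (and does not need). Your method also generalizes more readily to longer chains $R_{u_1^2,\dots,u_k^2,p}$, since the socle and size formulas scale with $k$ while the paper's inductive reduction to the $k{-}1$ case requires redoing the generator bookkeeping each time.
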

\begin{proof}
Let $f_1(x)=f_8(x)$. Since $ f_8(x)|f_4(x)|f_2(x)|f_1(x)$, $f_4(x)|f_3(x)|f_1(x)$,  $f_6(x)|f_2(x)|f_1(x)$, $f_8(x)|f_6(x)|f_5(x)|f_1(x)$, $f_8(x)|f_7(x)|f_3(x)|f_1(x)$ and $f_7(x)|f_5(x)|f_1(x)$. Therefore, we get $f_1(x)=f_2(x)=f_3(x)=f_4(x)=f_5(x)=f_6(x)=f_7(x)=f_8(x)$. Let $B_1 =  f_1(x)+uf_{1,2}(x)+vf_{1,3}(x)+uvf_{1,4}(x), B_2 = uf_2(x)+vf_{2,3}(x)+uvf_{2,4}, B_3 = vf_3(x)+uvf_{3,4}(x), B_{4} =uvf_4(x)$. Then we have Im$\phi=\langle B_1,B_2,B_3,B_4\rangle$ and Ker$\phi=w\langle B_5,B_6,B_7,B_8\rangle$, where $B_5 =  f_5(x)+uf_{5,6}(x)+vf_{5,7}(x)+uvf_{5,8}(x), B_6 = uf_6(x)+vf_{6,7}(x)+uvf_{6,8}, B_7 = vf_7(x)+uvf_{7,8}(x), B_{8} =uvf_8(x)$. From \cite[Proposition 3.3]{KGP15}, we get Im$\phi=\langle B_1\rangle$ and Ker$\phi=w\langle B_5\rangle$. Therefore, we have $C=\langle f_1(x)+uf_{1,2}(x)+vf_{1,3}(x)+uvf_{1,4}(x)+f_{1,5}(x)+uf_{1,6}(x)+vf_{1,7}(x)+uvf_{1,8}(x),wf_5(x)+uwf_{5,6}(x)+vwf_{5,7}(x)+uvwf_{5,8}(x)\rangle $ . Now to show $C=\left\langle A_1\right\rangle,$ we show that $f_{1,2}(x)=f_{5,6}(
x)
$, $f_{1,3}(x)=f_{5,7}(x)$ and $f_{1,4}(x)=f_{5,8}(x)$. Since $wA_1-A_5$=$uw\left(f_{1,2}(x)-f_{5,6}(x)\right)+ vw\left(f_{1,3}(x)-f_{5,7}(x)\right)+uvw\left(f_{1,4}(x)-f_{5,8}(x)\right)\in C.$ This gives, $\left(f_{1,2}(x)-f_{5,6}(x)\right)\in C_6=\left\langle f_6(x)\right\rangle$. Therefore, $f_6(x)|\left(f_{1,2}(x)-f_{5,6}(x)\right)$. Since $({\rm deg}(f_{1,2}(x)),{\rm deg}(f_{5,6}(x)))<{\rm deg}(f_6(x))$. This implies that $f_{1,2}(x)-f_{5,6}(x)=0$. Thus, $f_{1,2}(x)=f_{5,6}(x)$.  Hence, we get $wA_1-A_5=vw\left(f_{1,3}(x)-f_{5,7}(x)\right)+uvw\left(f_{1,4}(x)-f_{5,8}(x)\right)$ $\in C.$ This gives, $\left(f_{1,3}(x)-f_{5,7}(x)\right)\in C_7=\left\langle f_7(x)\right\rangle$. Therefore, $f_7(x)|\left(f_{1,3}(x)-f_{5,7}(x)\right)$. Since $({\rm deg}(f_{1,3}(x)),{\rm deg}(f_{5,7}(x)))<{\rm deg}(f_7(x))$. This implies that $\left(f_{1,3}(x)-f_{5,7}(x)\right)=0$. Therefore, $f_{1,3}(x)=f_{5,7}(x)$. Finally, we have $wA_1-A_5=uvw\left(f_{1,4}(x)-f_{5,8}(x)\right)\in C.$ This gives, $\left(f_{1,4}(x)-f_{5,8}(x)
\right)\in C_8=\left\langle f_8(x)\right\rangle$. Therefore, $f_8(x)|\left(f_{1,4}(x)-f_{5,8}(x)\right)$. Since $({\rm deg}(f_{1,4}(x)),{\rm deg}(f_{5,8}(x)))<{\rm deg}(f_8(x))$. This implies that $\left(f_{1,4}(x)-f_{5,8}(x)\right)=0$. Therefore, $f_{1,4}(x)=f_{5,8}(x)$. This shows that $wA_1=A_5$. Hence $C=\left\langle A_1\right\rangle $ and $C \simeq R_{u^2, v^2, w^2, p}^{n-{\rm deg}(f_1(x))}$. Conversely, if $C$ is a free cyclic code, we must have $C =\langle A_1 \rangle$. Since $uvw f_8(x) \in C$, we have $uvwf_8(x) = 
uvw \alpha f_1(x)$ for some $\
\alpha \in \F_p$. Note that $f_8(x)|f_1(x)$, hence by comparing the coefficients both sides, we get $f_1(x) = f_8(x)$. Now for the second condition, by division algorithm, we have $x^n-1=A_1q(x)+r(x)$, where $r(x)=0$ or ${\rm deg}(r(x))<{\rm deg}(f_1(x)).$ This implies that $r(x)=(x^n-1)-A_1q(x)\in C$. Since $A_1$ is the lowest degree polynomial 
in $C$. So $r(x)=0$. Hence, $A_1|(x^n-1)$ in $R_{u^2, v^2, w^2, p}[x]$.
\end{proof}

\subsection{\rm When $n$ is relatively prime to $p$} 
Let $n$ be a positive integer relatively prime to $p$. First, we slightly refine Theorem 3.4 of \cite{KGP15}, which gives the structure of a cyclic code over the ring $R_{u^2,v^2,p}=\F_p[u,v]/\langle u^2, v^2, uv-vu \rangle$. Let $C_{u,v}$ be a cyclic code over the ring $R_{u^2,v^2,p}$. From Theorem 3.4 of \cite{KGP15}, we have $C_{u,v}=\langle g(x)+ua_1(x)+uvr_1(x), va_2(x)+uva_3(x)\rangle$ with $a_{3}|a_1(x)|g(x)|(x^n-1)$ and $ a_3(x)|a_2(x)|g(x)|(x^n-1).$ Thus, $\frac{x^n-1}{g(x)}\frac{x^n-1}{a_1(x)}(g(x)+ua_1(x)+uvr_1(x)) = uv\frac{x^n-1}{g(x)}\frac{x^n-1}{a_1(x)}r_1(x) \in C_{u,v}$. This gives, $\frac{x^n-1}{g(x)}\frac{x^n-1}{a_1(x)}r_1(x) \in C_4=\text{Tor}(\text{Tor}(C_{u,v})) = \langle a_3(x)\rangle$ (see Page 165 of \cite{KGP15}). Hence, $a_3(x)|\frac{x^n-1}{g(x)}\frac{x^n-1}{a_1(x)}r_1(x).$ Since $n$ is relatively prime to $p$, $x^n-1$ can be uniquely factored as product of distinct irreducible factors. Therefore, we must have $\text{g.c.d.}\left(a_{3}(x), \frac{x^n-1}{g(x)}\right) = \text{g.c.d.}
\left(a_{3}(x), \frac{x^n-1}{a_1(x)}\right)= 1.$ This gives, $a_{3}(x)|r_1(x)$. But, from Theorem 3.1 of \cite{KGP15}, we have $\text{deg}(r_{1}(x))<\text{deg}(a_{3}(x))$. This gives, $r_{1}(x)=0$. Thus we have proved the following theorem.
\begin{theo}
Let $C_{u,v}$ be a cyclic code over the ring $R_{u^2,v^2,p}$ of length $n$. If $n$ is relatively prime to $p$, then we have $C_{u,v}=\langle f_1(x)+uf_2(x), vf_3(x)+uvf_4(x)\rangle$ with $f_4(x)|f_2(x)|f_1(x)|(x^n-1)$ and $ f_4(x)|f_3(x)|f_1(x)|(x^n-1).$
\end{theo}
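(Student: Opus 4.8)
The plan is to begin from the known description of cyclic codes over $R_{u^2,v^2,p}$ and to show that, under the coprimality hypothesis, the ``cross term'' $r_1(x)$ in the first generator must vanish. By Theorem 3.4 of \cite{KGP15} I may write
\[
C_{u,v}=\langle g(x)+ua_1(x)+uvr_1(x),\; va_2(x)+uva_3(x)\rangle
\]
with $a_3(x)\mid a_1(x)\mid g(x)\mid(x^n-1)$ and $a_3(x)\mid a_2(x)\mid g(x)\mid(x^n-1)$, and with $\deg(r_1(x))<\deg(a_3(x))$ by Theorem 3.1 of \cite{KGP15}. Relabeling $g=f_1$, $a_1=f_2$, $a_2=f_3$, $a_3=f_4$ at the very end will then produce the asserted form, provided $r_1(x)=0$.

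First I would isolate the $uv$-component of the first generator. Since $g(x)$ and $a_1(x)$ both divide $x^n-1$, multiplying the first generator by $\frac{x^n-1}{g(x)}\cdot\frac{x^n-1}{a_1(x)}$ annihilates its $1$- and $u$-parts, because in the quotient $R_{u^2,v^2,p,n}$ each of these becomes a multiple of $x^n-1=0$. This leaves
\[
uv\,\frac{x^n-1}{g(x)}\,\frac{x^n-1}{a_1(x)}\,r_1(x)\in C_{u,v}.
\]
By the definition of the iterated torsion, this says $\frac{x^n-1}{g(x)}\frac{x^n-1}{a_1(x)}r_1(x)\in \text{Tor}(\text{Tor}(C_{u,v}))=\langle a_3(x)\rangle$, so $a_3(x)$ divides this product.

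The main step, and the only place the hypothesis $\gcd(n,p)=1$ is used, is the coprimality argument; I expect this to be the crux of the proof. When $n$ is relatively prime to $p$, the polynomial $x^n-1$ is separable and factors into pairwise distinct irreducibles in $\F_p[x]$. Because $a_3(x)\mid g(x)\mid x^n-1$ and $a_3(x)\mid a_1(x)\mid x^n-1$, every irreducible factor of $a_3(x)$ already occurs in $g(x)$ and in $a_1(x)$; by distinctness of the factors it cannot reappear in the complementary cofactors $\frac{x^n-1}{g(x)}$ or $\frac{x^n-1}{a_1(x)}$. Hence $\text{g.c.d.}\big(a_3(x),\frac{x^n-1}{g(x)}\big)=\text{g.c.d.}\big(a_3(x),\frac{x^n-1}{a_1(x)}\big)=1$, and from $a_3(x)\mid \frac{x^n-1}{g(x)}\frac{x^n-1}{a_1(x)}r_1(x)$ I may cancel both cofactors to conclude $a_3(x)\mid r_1(x)$.

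Finally I would combine this divisibility with the degree bound $\deg(r_1(x))<\deg(a_3(x))$: a nonzero multiple of $a_3(x)$ has degree at least $\deg(a_3(x))$, so the only possibility is $r_1(x)=0$. Substituting $r_1=0$ and relabeling $g,a_1,a_2,a_3$ as $f_1,f_2,f_3,f_4$ then yields exactly $C_{u,v}=\langle f_1(x)+uf_2(x),\,vf_3(x)+uvf_4(x)\rangle$ together with the divisibility chains $f_4(x)\mid f_2(x)\mid f_1(x)\mid(x^n-1)$ and $f_4(x)\mid f_3(x)\mid f_1(x)\mid(x^n-1)$, which are inherited unchanged from the original presentation. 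Everything apart from the separability/gcd step is routine bookkeeping with the torsion ideals already introduced in the excerpt.
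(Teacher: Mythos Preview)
Your proposal is correct and follows essentially the same argument as the paper: start from the presentation of Theorem~3.4 of \cite{KGP15}, multiply the first generator by $\frac{x^n-1}{g(x)}\frac{x^n-1}{a_1(x)}$ to force the residue into $\text{Tor}(\text{Tor}(C_{u,v}))=\langle a_3(x)\rangle$, use separability of $x^n-1$ to strip off the cofactors, and then invoke the degree bound from Theorem~3.1 of \cite{KGP15} to conclude $r_1(x)=0$. The steps and their justifications match the paper's proof almost verbatim.
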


If $C=\langle A_1, A_2, \cdots, A_{8}\rangle$ is a cyclic code of length $n$ over the ring $R_{u^2,v^2,w^2,p}$ then we have $\text{Im}\phi=\langle A_1, A_2, A_3, A_4 \rangle$ and $\text{ker}\phi=w\langle A_5, A_6, A_7, A_8\rangle$. (See Equation \ref{surj-hom} for the definition of $\phi$). Note that we can consider $\text{Im}\phi$ and  $\text{ker}\phi$ as cyclic codes over the ring $R_{u^2,v^2,p}$. Since $n$ is relatively prime to $p$, from the above theorem, we have $\text{Im}\phi=\langle f_1(x)+uf_2(x), vf_3(x)+uvf_4(x)\rangle$ and $\text{ker}\phi=w\langle f_5(x)+uf_6(x), vf_7(x)+uvf_8(x)\rangle$ with $f_4(x)|f_2(x)|f_1(x)|(x^n-1)$, $ f_4(x)|f_3(x)|f_1(x)|(x^n-1),~$ $f_8(x)|f_6(x)|f_5(x)|(x^n-1)$ and $f_8(x)|f_7(x)|f_5(x)|(x^n-1).$ We also have the conditions $f_5(x)|f_1(x), f_6(x)|f_2(x)$ and $f_7(x)|f_3(x)$ (from Conditions 1 and 2 of Theorem \ref{properties}). Therefore, the code $C$ can be written as $C=\langle f_1(x)+uf_2(x)+w(f_{1,5}(x)+uf_{1,6}(x)+vf_{1,7}(x)+uvf_{1,8}(x)), f_3(x)+uf_4(x)+w(f_{3,5}(
x)
+uf_{3,6}(x)+vf_{3,7}(x)+uvf_{3,8}(x)), w(f_5(x)+uf_6(x)), w(vf_7(x)+uvf_8(x))\rangle$ with the same conditions as above on $f_i(x)$'s. From Condition 4 of Theorem \ref{properties}, for $i=1$ and $4 \leq j \leq 7$, we get $f_{1+j}(x)|\frac{x^n-1}{f_1(x)}\frac{x^n-1}{f_2(x)} \cdots \frac{x^n-1}{f_{j}(x)}f_{1,1+j}(x)$. Since $n$ is relatively prime to $p$, $x^n-1$ can be uniquely factored as product of distinct irreducible factors. Therefore, we must have $\text{g.c.d.}\left(f_{1+j}(x), \frac{x^n-1}{f_k(x)}\right) = 1$, for $1\leq k \leq j$. This gives $f_{1+j}(x)|f_{1,1+j}(x)$. From Theorem \ref{unique}, we have $\text{deg}(f_{1,j+1}(x))<\text{deg}(f_{1+j}(x))$, for $4\leq j \leq 7$. This gives $f_{1,1+j}(x)=0$, for $4 \leq j \leq 7$. Similarly, from Condition 4 of Theorem \ref{properties}, for $i=3$ and $2 \leq j \leq 5$, we can show that $f_{3,3+j}(x)=0.$ Thus we have proved the following theorem.

\begin{theo}\label{relatively-prime}
Let $C=\langle A_1, A_2, \cdots, A_{8}\rangle$ be a cyclic code over the ring $R_{u^2,v^2, w^2, p}$ of length $n$. If $n$ is relatively prime to $p$, then we have $C=\langle f_1(x)+uf_2(x), vf_3(x)+uvf_4(x), w(f_5(x)+uf_6(x)), w(vf_7(x)+uvf_8(x))\rangle$ with the conditions $f_4(x)|f_2(x)|f_1(x)|(x^n-1)$, $ f_4(x)|f_3(x)|f_1(x),$ $f_8(x)|f_6(x)|f_5(x)|(x^n-1)$, $f_8(x)|f_7(x)|f_5(x)|f_1(x), f_6(x)|f_2(x)$ and $f_7(x)|f_3(x)$.
\end{theo}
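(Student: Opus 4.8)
The plan is to reduce the statement to Theorem 3.5 (the two-variable relatively-prime structure theorem) by exploiting the projection $\phi$ and its kernel, and then to kill the remaining ``mixed'' coefficients $f_{1,1+j}$ and $f_{3,3+j}$ using the divisibility conditions of Theorem \ref{properties} together with the unique factorization of $x^n-1$ in the coprime case. Concretely, I would first recall that for a cyclic code $C=\langle A_1,\dots,A_8\rangle$ one has $\mathrm{Im}\,\phi=\langle A_1,A_2,A_3,A_4\rangle$ and $\ker\phi=w\langle A_5,A_6,A_7,A_8\rangle$, both of which are cyclic codes over $R_{u^2,v^2,p}$. Applying Theorem 3.5 to each of these gives $\mathrm{Im}\,\phi=\langle f_1(x)+uf_2(x),\,vf_3(x)+uvf_4(x)\rangle$ and $\ker\phi=w\langle f_5(x)+uf_6(x),\,vf_7(x)+uvf_8(x)\rangle$, with the two chains of divisibility $f_4\mid f_2\mid f_1\mid(x^n-1)$, $f_4\mid f_3\mid f_1$, $f_8\mid f_6\mid f_5\mid(x^n-1)$, $f_8\mid f_7\mid f_5$. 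The cross-conditions $f_5\mid f_1$, $f_6\mid f_2$, $f_7\mid f_3$ and $f_8\mid f_7\mid f_5\mid f_1$ come directly from Conditions 1 and 2 of Theorem \ref{properties}.

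\textbf{Pulling back to $C$.} Next I would assemble $C$ from these two pieces by a pullback argument: since $\mathrm{Res}(C)=\mathrm{Im}\,\phi$ and $\mathrm{Tor}(C)=J$ with $\ker\phi=wJ$, every generator of $\mathrm{Im}\,\phi$ lifts to an element of $C$ of the form (residue part) $+\,w(\text{something})$, and the generators of $\ker\phi$ already lie in $C$. This yields the provisional presentation
\[
C=\big\langle\, f_1+uf_2+w(f_{1,5}+uf_{1,6}+vf_{1,7}+uvf_{1,8}),\ \ vf_3+uvf_4+w(f_{3,5}+uf_{3,6}+vf_{3,7}+uvf_{3,8}),\ \ w(f_5+uf_6),\ \ w(vf_7+uvf_8)\,\big\rangle,
\]
carrying all the divisibility conditions already listed. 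The goal is therefore to show that in the coprime case the $w$-tails attached to the two residue generators vanish, i.e.\ $f_{1,1+j}(x)=0$ for $4\le j\le7$ and $f_{3,3+j}(x)=0$ for $2\le j\le5$.

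\textbf{The key step — clearing the tails.} This is the crux, and it is where the coprimality of $n$ and $p$ enters. For $i=1$ and $4\le j\le7$, Condition 4 of Theorem \ref{properties} gives
\[
f_{1+j}(x)\ \Big|\ \frac{x^n-1}{f_1(x)}\cdot\frac{x^n-1}{f_2(x)}\cdots\frac{x^n-1}{f_j(x)}\,f_{1,1+j}(x).
\]
Because $\gcd(n,p)=1$, the polynomial $x^n-1$ splits into \emph{distinct} irreducible factors over $\F_p$, so each $f_k$ and the complementary product $\frac{x^n-1}{f_k}$ share no common factor; hence $\gcd\!\left(f_{1+j},\,\frac{x^n-1}{f_k}\right)=1$ for every $1\le k\le j$, and the divisibility collapses to $f_{1+j}(x)\mid f_{1,1+j}(x)$. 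But Theorem \ref{unique} forces $\deg f_{1,1+j}<\deg f_{1+j}$, so $f_{1,1+j}=0$. The identical argument with $i=3$ and $2\le j\le5$ gives $f_{3,3+j}=0$. Substituting these vanishings into the provisional presentation leaves exactly
\[
C=\langle f_1+uf_2,\ vf_3+uvf_4,\ w(f_5+uf_6),\ w(vf_7+uvf_8)\rangle,
\]
with all the stated divisibility relations retained, which is the claim.

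\textbf{Where the difficulty lies.} The genuinely delicate part is the coprimality reduction in the key step: one must verify that the cumulative product $\prod_{k=1}^{j}\frac{x^n-1}{f_k}$ really is coprime to $f_{1+j}$. This rests on unique factorization of $x^n-1$ into distinct irreducibles, which fails when $p\mid n$, so the hypothesis is used essentially here; care is needed because $f_{1+j}$ divides each $f_k$ appearing in the chain, and one must invoke squarefreeness to conclude that $f_{1+j}$ cannot hide inside any complementary cofactor. The pullback assembly and the degree-bound cancellation are routine given Theorems \ref{unique}, \ref{properties}, and 3.5, so I would state them briefly and spend the bulk of the proof on justifying the gcd computation.
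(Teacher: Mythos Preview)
Your proposal is correct and follows essentially the same route as the paper: reduce $\mathrm{Im}\,\phi$ and $\ker\phi$ to the two-variable coprime structure theorem, pull back to the provisional presentation with $w$-tails, then use Condition~4 of Theorem~\ref{properties} together with the squarefreeness of $x^n-1$ and the degree bounds of Theorem~\ref{unique} to force $f_{1,1+j}=0$ ($4\le j\le 7$) and $f_{3,3+j}=0$ ($2\le j\le 5$). The only cosmetic discrepancy is that the two-variable result you cite as ``Theorem~3.5'' is numbered~3.4 in the paper (it is the theorem proved immediately before the one under discussion).
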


\section{Ranks and minimal spanning sets}
We follow Dougherty and Shiromoto \cite[page 401]{Dou-Shiro01} for the definition of the rank of a code $C$. We first prove the number of lemmas that we use to find the rank and the minimal spanning set of cyclic codes over $R_{u^2, v^2, w^2, p}$.
\begin{lemma} \label{rak1}
Let $C$ be a cyclic code over the ring $R_{u^2, v^2, w^2, p}$. If $C=\langle A_1, A_2, \cdots, A_8\rangle$  then polynomials in $C$ in the following  forms can be written as follows:
\begin{enumerate}[{\rm (1)}]
\item  $w(p_0(x)+up_1(x)+vp_2(x)+uvp_3(x))$ = $q_5(x)A_5+q_6(x)A_6+q_7(x)A_7+q_8(x)A_8$,
\item  $w(up_1(x)+vp_2(x)+uvp_3(x))$ = $q_6(x)A_6+q_7(x)A_7+q_8(x)A_8$,
\item  $w(vp_2(x)+uvp_3(x))$ = $q_7(x)A_7+q_8(x)A_8$,
\item  $w(uvp_3(x))$ = $q_8(x)A_8$,
\end{enumerate}
for some $q_i(x)\in \F_p[x]$, $5\leq i \leq 8$.
\end{lemma}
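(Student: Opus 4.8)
The plan is to prove part (1) by a descent that successively eliminates the four ``layers'' $w$, $uw$, $vw$, $uvw$ of the given element, and to observe that parts (2)--(4) are exactly this same descent started one, two, or three layers further down. Throughout I work modulo $x^n-1$ and repeatedly use the descriptions $C_5=\langle f_5(x)\rangle$, $C_6=\langle f_6(x)\rangle$, $C_7=\langle f_7(x)\rangle$, $C_8=\langle f_8(x)\rangle$ from \eqref{c8} and the surrounding display, together with the explicit shapes of $A_5,\dots,A_8$.

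I would begin with an element $w\theta=w(p_0(x)+up_1(x)+vp_2(x)+uvp_3(x))\in C$ as in case (1). Reducing $w\theta$ modulo $\langle uw,vw\rangle$ kills every term carrying a $u$ or a $v$ (note $uvw=u\cdot vw\in\langle vw\rangle$), so $wp_0(x)\in C \bmod \langle uw,vw\rangle$; by the definition of $C_5$ this forces $p_0(x)\in C_5=\langle f_5(x)\rangle$, hence $p_0(x)=q_5(x)f_5(x)$ for some $q_5(x)\in\F_p[x]$. Subtracting $q_5(x)A_5$ annihilates the $w$-coefficient exactly and leaves an element of the shape $w(up_1'(x)+vp_2'(x)+uvp_3'(x))\in C$, which is precisely the form treated in case (2).

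Next I would iterate one layer down. Reducing modulo $\langle vw\rangle$ shows $uw\,p_1'(x)\in C \bmod\langle vw\rangle$, so $p_1'(x)\in C_6=\langle f_6(x)\rangle$ and $p_1'(x)=q_6(x)f_6(x)$; subtracting $q_6(x)A_6$ removes the $uw$-coefficient and produces an element $w(vp_2''(x)+uvp_3''(x))\in C$ of the shape in case (3). A further step, reducing modulo $\langle uvw\rangle$, gives $p_2''(x)\in C_7=\langle f_7(x)\rangle$ and, after subtracting $q_7(x)A_7$, an element $uvw\,p_3'''(x)\in C$ of the shape in case (4). Finally $uvw\,p_3'''(x)\in C$ means $p_3'''(x)\in C_8=\langle f_8(x)\rangle$, so $p_3'''(x)=q_8(x)f_8(x)$ and subtracting $q_8(x)A_8$ yields $0$. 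Reading the cancellations backwards produces the four asserted identities, with all $q_i(x)\in\F_p[x]$; cases (2), (3), (4) are obtained by entering this descent directly at the $A_6$-, $A_7$-, $A_8$-step respectively.

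The only point demanding care --- and the place I would expect a reader to want the most detail --- is verifying at each stage that the cancellation is \emph{exact}. Since $p_0(x)=q_5(x)f_5(x)$ holds as an identity in $\F_p[x]/\langle x^n-1\rangle$, subtracting $q_5(x)A_5$ genuinely clears the whole $w$-coefficient and merely shifts the $uw$-, $vw$-, $uvw$-coefficients by $-q_5f_{5,6}$, $-q_5f_{5,7}$, $-q_5f_{5,8}$, so the remainder really does have the next, strictly shorter shape. Once that bookkeeping is in place the descent terminates after four steps, and the successive memberships $p_i\in C_{5+i}$ are immediate from the Res/Tor definitions of $C_5,\dots,C_8$; there is no obstacle beyond this routine verification.
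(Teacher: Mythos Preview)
Your proof is correct and follows essentially the same approach as the paper: successively use the definitions of $C_5,C_6,C_7,C_8$ to peel off the $w$-, $uw$-, $vw$-, $uvw$-layers by subtracting $q_5A_5,\dots,q_8A_8$, noting that cases (2)--(4) are the same descent started further down. Your explicit remark about the exact cancellation and the shifts $-q_5f_{5,6},-q_5f_{5,7},-q_5f_{5,8}$ is, if anything, a bit more careful than the paper's version.
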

\begin{proof}
$(1)$ Let $A'=w(p_0(x)+up_1(x)+vp_2(x)+uvp_3(x))\in C$. Thus,  $p_0(x)\in C_5= \langle f_5(x) \rangle$. This gives, $p_0(x)=q_5(x)f(x)$ for some $q_5(x) \in \F_p[x]$.
Therefore,
$A'-q_5(x)A_5=w\left((p_1(x)-q_5(x)f_{5,6}(x))+u(p_2(x)-q_5(x)f_{5,7}(x))+uv(p_3(x)-q_5(x)f_{5,8}(x))\right)$ $\in C$.
Thus,
      $p_1(x)-q_5(x)f_{5,6}(x) \in C_6= \langle f_6(x) \rangle.\notag$
Therefore, $(p_1(x)-q_5(x)f_{5,6}(x))$=$q_6(x)f_6(x)$ for some $q_6(x) \in \F_p[x]$.
Again,
\begin{multline}
A'-q_5(x)A_5-q_6(x)A_6=w(v(p_2(x)-q_5(x)f_{5,7}(x)-q_1(x)f_{6,7}(x))\\+uv(p_3(x)-q_5(x)f_{5,8}(x)-q_1(x)f_{6,8}(x)))\in C.\notag
\end{multline}
Thus,
      $(p_2(x)-q_5(x)f_{5,7}(x)-q_1(x)f_{6,7}(x)) \in C_7= \langle f_7(x) \rangle.\notag$
Therefore,
$(p_2(x)-q_5(x)f_{5,7}(x)-q_6(x)f_{6,7}(x))$=$q_7(x)f_7(x)$ for some $ q_7(x) \in \F_p[x]$.
Again,
\begin{multline}
A'-q_5(x)A_5-q_6(x)A_6-q_7(x)A_7\\=w(uv(p_3(x)-q_5(x)f_{5,8}(x)-q_6(x)f_{6,8}(x)-q_7(x)f_{7,8}(x))\in C.\notag 
\end{multline}
Thus,
      $(p_3(x)-q_5(x)f_{5,8}(x)-q_6(x)f_{6,8}(x)-q_7(x)f_{7,8}(x)) \in C_8= \langle f_8(x) \rangle.\notag$
Therefore,
$(p_3(x)-q_5(x)f_{5,8}(x)-q_6(x)f_{6,8}(x)-q_7(x)f_{7,8}(x))$=$q_8(x)f_8(x)$
for some $ q_8(x) \in \F_p[x]$. That is,
$A'-q_5(x)A_5-q_6(x)A_6-q_7(x)A_7-q_8(x)A_8=0$
$\Rightarrow $
$A'=q_5(x)A_5+q_6(x)A_6+q_7(x)A_7+q_8(x)A_8.\notag$
This proves Statement ($1$). The proof of other cases  are similar to the proof of Statement ($1$).
\end{proof}

\begin{lemma} \label{lem2}
Let $C$ be a cyclic code over the ring $R_{u^2, v^2, w^2, p}$. If $C=\langle A_1, A_2, \cdots, A_8\rangle$ and ${\rm deg}(f_{i}(x)) = t_i, 1\leq i \leq 8,$ then the following conditions hold:
\begin{enumerate}[{\rm (1)}]
\item $x^{t_i-t_8}A_8$ = $ c_iu_{8-i}A_i+q_8(x)A_8$, $ 1 \leq i \leq 7$, where ${\rm deg}(q_8(x)) < t_i-t_8 $, $u_1 = u, u_2 = v, u_3 = uv, u_4 = w, u_5 = uw, u_6 = vw ~\text{and}~ u_7 = uvw$,\label{cond1-l2}
\item $x^{t_5-t_7}A_7$ = $c_5vA_5-q_7'(x)A_7-q_8'(x)A_8$, where ${\rm deg}(q_7'(x)) < t_5-t_7 $, \label{cond2-l2}
\item $x^{t_3-t_7}A_7$=$c_3wA_3-q_7'(x)A_7-q_8'(x)A_8$, where ${\rm deg}(q_7'(x)) < t_3-t_7 $,\label{cond3-l2}
\item $x^{t_1-t_7}A_7$=$c_1vwA_1-q_7'(x)A_7-q_8'(x)A_8$, where ${\rm deg}(q_7'(x)) < t_1-t_7 $,\label{cond4-l2}
\item $x^{t_5-t_6}A_6=c_5uA_5-q_6'(x)A_6-q_7'(x)A_7-q_8'(x)A_8$, where ${\rm deg}(q_6'(x))<t_5-t_6$,\label{cond5-l2}
\item $x^{t_2-t_6}A_6=c_2wA_2-q_6'(x)A_6-q_7'(x)A_7-q_8'(x)A_8$, where ${\rm deg}(q_6'(x))<t_2-t_6$,\label{cond7-l2}
\item $x^{t_1-t_6}A_6=c_1uwA_1-q_6'(x)A_6-q_7'(x)A_7-q_8'(x)A_8$, where ${\rm deg}(q_6'(x))<t_1-t_6$, and \label{cond6-l2}
\item $x^{t_1-t_5}A_5=c_1wA_1+q_5(x)A_5+q_6(x)A_6+q_7(x)A_7+q_8(x)A_8$, where ${\rm deg}(q_5(x))<t_1-t_5$,\label{cond8-l2}
\end{enumerate}
$c_i \in \F_p$ and $q_i(x), q_i'(x) \in \F_p[x]$.
\end{lemma}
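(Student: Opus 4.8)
The plan is to prove all eight identities by one and the same reduction scheme: multiply the generator being reduced by a suitable monomial in $u,v,w$, cancel a leading term against the shifted generator by a scalar choice, and then collapse the low-degree remainder by the appropriate part of Lemma~\ref{rak1}.

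First I would record the monomial multiples that occur, all of which simplify drastically because $u^2=v^2=w^2=0$ and the three variables commute. A direct computation gives $u_{8-i}A_i = uvw\,f_i(x)$ for every $1\le i\le 7$ (for example $vwA_2 = uvw f_2(x)$ and $wA_4 = uvw f_4(x)$), while for the remaining statements one finds $vA_5 = vw f_5(x)+uvw f_{5,6}(x)$, $wA_3 = vw f_3(x)+uvw f_{3,4}(x)$, $vwA_1 = vw f_1(x)+uvw f_{1,2}(x)$, $uA_5 = uw f_5(x)+uvw f_{5,7}(x)$, $wA_2 = uw f_2(x)+vw f_{2,3}(x)+uvw f_{2,4}(x)$, $uwA_1 = uw f_1(x)+uvw f_{1,3}(x)$, and $wA_1 = w f_1(x)+uw f_{1,2}(x)+vw f_{1,3}(x)+uvw f_{1,4}(x)$. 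In each identity the generator multiple and the shifted generator occupy the same bottom layer ($uvw$, $vw$, $uw$, or $w$), and the relevant divisibility from Theorem~\ref{properties} --- namely $f_8\mid f_i$ for (1), $f_7\mid f_5,f_3,f_1$ for (2)--(4), $f_6\mid f_5,f_2,f_1$ for (5)--(7), and $f_5\mid f_1$ for (8) --- guarantees $t_j\le t_i$, so the two leading degrees agree.

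I would then carry out the cancellation explicitly for statement~(1) and treat the others as analogues. Taking $c_i$ to be the quotient of the leading coefficient of $f_8(x)$ by that of $f_i(x)$, the difference $x^{t_i-t_8}A_8-c_i u_{8-i}A_i$ equals $uvw\bigl(x^{t_i-t_8}f_8(x)-c_i f_i(x)\bigr)$, whose degree-$t_i$ term has been removed; this is a polynomial of the form $w(uv\,p_3(x))$ with $\deg p_3<t_i$, so Lemma~\ref{rak1}(4) rewrites it as $q_8(x)A_8$, and the divisibility $f_8\mid p_3$ embedded in that lemma forces $\deg q_8<t_i-t_8$. Statements (2)--(4) repeat the argument one layer up: after cancelling the leading $vw$-term the remainder has the shape $w(v\,p_2(x)+uv\,p_3(x))$, which Lemma~\ref{rak1}(3) collapses to $q_7(x)A_7+q_8(x)A_8$, with $f_7\mid p_2$ giving $\deg q_7'<t_i-t_7$. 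Likewise (5)--(7) cancel a leading $uw$-term and invoke Lemma~\ref{rak1}(2) (remainder $w(u\,p_1+v\,p_2+uv\,p_3)$), and (8) cancels a leading $w$-term and invokes Lemma~\ref{rak1}(1).

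The arithmetic of the monomial products and the scalar choices $c_i$ is routine; the step that genuinely needs the earlier machinery is the sharp degree bound $\deg q'<t_i-t_j$. Plain cancellation only lowers the degree within the relevant layer, whereas the stated bound requires that the top component of the remainder be divisible by $f_j(x)$. This is exactly what Lemma~\ref{rak1} supplies, since its proof shows that such a remainder lies in $C_j=\langle f_j(x)\rangle$; combined with the divisibility chains of Theorem~\ref{properties}, this is what turns ``lower degree in the layer'' into the claimed bound on the cofactor. I therefore expect the main care to go into verifying, layer by layer, that no cross-terms from the lower layers disturb this degree accounting.
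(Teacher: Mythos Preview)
Your proposal is correct and follows essentially the same approach as the paper: multiply a generator by the appropriate monomial in $u,v,w$ to land in the same layer as the shifted $A_j$, cancel, and then absorb the remainder via Lemma~\ref{rak1}. The only cosmetic difference is that the paper, rather than cancelling just the leading term, uses the exact divisibility $f_i = s_i(x)f_j$ from Theorem~\ref{properties} to make the top-layer cancellation vanish identically (so $u_{8-i}A_i = s_i(x)A_8$, $vA_5 - s_5(x)A_7 \in uvw\cdot\F_p[x]$, etc.) and then solves for the top monomial of $s_i(x)$; for statement~(8) it phrases the same leading-term step as a one-step division of $x^{t_1-t_5}(f_5+uf_{5,6}+vf_{5,7}+uvf_{5,8})$ by the regular polynomial $A_1$.
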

\begin{proof}
\begin{enumerate}
 \item  From Condition (1) of Theorem 4.2, we have
$f_8(x)|f_i(x)$, $1 \leq i \leq 7$. Thus, $f_i(x)=s_i(x)f_8(x)$ for some $s_i(x) \in \F_p[x] $. This can be written as,
$f_i(x)=(s_{i0}+xs_{i1}+ \cdots+x^{t_i-t_8}s_{i(t_i-t_8)})f_8(x) $, where $ s_{ij} \in \F_p$. Clearly $s_{i(t_i-t_8)}\neq 0.$
Therefore, $u_{8-i}A_i-s_i(x)A_8 = uvw(f_i(x)-s_i(x)f_8(x))=0.$ This gives, $x^{t_i-t_8}A_8=s_{i(t_i-t_8)}^{-1}u_{8-i}A_i-s_{i(t_i-t_8)}^{-1}(s_{i0}+xs_{i1}+ \cdots+x^{t_i-t_8-1}s_{i(t_i-t_8-1)})A_8$. Hence, $x^{t_i-t_8}A_8=c_iu_{8-i}A_i+q_8(x)A_8$, where ${\rm deg}(q_8(x))<t_i-t_8$.
\item From Condition (2) of Theorem 4.2, we have $f_7(x)|f_5(x)$. Thus, $f_5(x)=s_5(x)f_7(x)$ for some $s_5(x) \in F_p[x]$. This can be written as $f_5(x)=(s_{50}+xs_{51}+ \cdots+x^{t_5-t_7}s_{5(t_5-t_7)})f_7(x)$.  This together with Condition (4) of Lemma \ref{rak1}, we get  $vA_5-s_5(x)A_7=w(uv(f_{5,6}(x)-s_5(x)f_{7,8}(x))) = q_8(x)A_8 $. Thus,
          \begin{equation} \label{l31}
  s_5(x)A_7(x)=vA_5-q_8(x)A_8.
  \end{equation}
 This can be written as 
$x^{t_5-t_7}A_7=s_{5(t_5-t_7)}^{-1}vA_5-s_{5(t_5-t_7)}^{-1}(s_{50}+xs_{51}+ \cdots +$ $x^{t_5-t_7-1}$ $s_{5(t_5-t_7-1)})A_7(x)-s_{5(t_5-t_7)}^{-1}q_8(x)A_8$. Thus,
\begin{equation} \label{l32}
 x^{t_5-t_7}A_7=c_5vA_5-q_7'(x)A_7-q_8'(x)A_8,
\end{equation}
 where ${\rm deg}(q_7'(x))=t_5-t_7-1<t_5-t_7$.

   \item The proof is similar to Condition \ref{cond2-l2}.
    \item The proof is similar to Condition \ref{cond2-l2}.
\item  From Condition (2) of Theorem 4.2, we have
    $f_6(x)|f_5(x)$. Thus, $f_5(x)=s_5(x)f_6(x)$ for some $s_5(x) \in F_p[x]$. This can be written as,
    $f_5(x)=(s_{50}+s_{51}x+ \cdots+s_{5(t_5-t_6)}x^{t_5-t_6})f_6(x)$, where $ s_{5i} \in \F_p$. This together with Condition (3) of Lemma \ref{rak1}, we get $uA_5-s_5(x)A_6(x)=w(v(-s_5(x)f_{6,7}(x))+uv(f_{5,7}(x)-s_5(x)f_{6,8}(x)))=q_7(x)A_7+q_8(x)A_8\in C.$ Thus, 
    \begin{equation} \label{l26}
    s_5(x)A_6(x)=uA_5-q_7(x)A_7-q_8(x)A_8. 
    \end{equation}
    This can be written as
  $x^{t_5-t_6}A_6=s_{5(t_5-t_6)}^{-1}uA_5-s_{5(t_5-t_6)}^{-1}(s_{50}+xs_{51}+ \cdots+x^{t_5-t_6-1}$ $s_{5(t_5-t_6-1)})A_6-s_{5(t_5-t_6)}^{-1}q_7(x)A_7-s_{5(t_5-t_6)}^{-1}q_8(x)A_8$, i.e.,
\begin{equation}
 x^{t_5-t_6}A_6 =c_5uA_5-q_6'(x)A_6-q_7'(x)A_7-q_8'(x)A_8,
\end{equation}
       where ${\rm deg}(q_6'(x))=t_5-t_6-1<t_5-t_6.$ 
    \item The proof is similar to Condition \ref{cond5-l2}.  
       \item The proof is similar to Condition \ref{cond5-l2}.  
       
     \item By the division algorithm, we have
   \begin{multline}\label{eqt15}
     x^{t_1-t_5}(f_5(x)+uf_{5,6}(x)+vf_{5,7}(x)+uvf_{5,8}(x))=c_1A_1+(p_0(x)+up_1(x)+vp_2(x)\\+uvp_3(x)+wp_4(x)+uwp_5(x)+vwp_6(x)+uvwp_7(x))
     \end{multline}
    where ${\rm deg}(p_0(x))<{\rm deg}(f_1(x))=t_1$.   
    Multiplying Eq.(\ref{eqt15}) by $w$ and applying Condition 1 of Lemma \ref{rak1} gives, 
     $ x^{t_1-t_5}A_5-c_1wA_1=w(p_0(x)+up_1(x)+vp_2(x)+uvp_3(x))=q_5(x)A_5+q_6(x)A_6+q_7(x)A_7+q_8(x)A_8$.
          That is,
          \begin{equation} \label{l33}
           x^{t_1-t_5}A_5=c_1wA_1+q_5(x)A_5+q_6(x)A_6+q_7(x)A_7+q_8(x)A_8.
           \end{equation}
          We have $ p_0(x)=q_5(x)f_5(x)$, thus,    
${\rm deg}(q_5(x))$+ ${\rm deg}(f_5(x)) = {\rm deg}(p_0(x))< t_1$. Hence,     
${\rm deg}(q_5(x)) < t_1-t_5 $. 
   \end{enumerate}
\end{proof}
\begin{theo} \label{rank-main}
Let $C$ be a cyclic code of length $n$ over $R_{u^2, v^2, w^2, p}$. If  $C = \langle A_1, A_2, A_3, A_4,$ $A_5, A_6, A_7, A_8\rangle$ with $t_i={\rm deg}(f_i(x))$, $1 \leq i \leq 8$, $ t_4'={\rm min} \{t_2,t_3\}$, $ t_6'={\rm min}\{t_2,t_5\}$, $ t_7'={\rm min}\{t_3,t_5\} $ and $ t_8'={\rm min}\{t_4,t_6,t_7\}$, then $ C $ has rank $ n+2t_1+t_4'+t_6'+t_7'+t_8'-t_2-t_3-t_4-t_5-t_6-t_7-t_8$. The minimal spanning set $B$ of the code $ C $ is  $ B=\{A_1, xA_1, $ $\cdots, x^{n-t_1-1}A_1, A_2, xA_2,\cdots,$ $ x^{t_1-t_2-1}A_2, $ $A_3, xA_3, $ $\cdots, x^{t_1-t_3-1}A_3, A_4, xA_4, $ $\cdots, x^{t_4'-t_4-1}$ $A_4
 ,A_5, xA_5, $ $\cdots, x^{t_1-t_5-1}A_5, A_6, xA_6,\cdots, x^{t_6'-t_6-1}A_6,A_7, xA_7,\cdots, x^{t_7'-t_7-1}A_7, A_8, xA_8,$\\
 $\cdots, x^{t_8'-t_8-1}A_8\}$. 
 \end{theo}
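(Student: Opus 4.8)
The plan is to regard $C$ as a module over the base ring $R=R_{u^2,v^2,w^2,p}$: because $C$ is closed under the cyclic shift, an $R_{u^2,v^2,w^2,p,n}$-linear combination of the $A_i$ is the same thing as an $R$-linear combination of the shifts $x^jA_i$, and the rank in the sense of \cite{Dou-Shiro01} is the least number of such shifts that generate $C$. Two things must be shown: that $B$ generates $C$, and that $|B|$ shifts are genuinely needed. Since summing the block sizes of $B$ gives $(n-t_1)+(t_1-t_2)+(t_1-t_3)+(t_4'-t_4)+(t_1-t_5)+(t_6'-t_6)+(t_7'-t_7)+(t_8'-t_8)=n+2t_1+t_4'+t_6'+t_7'+t_8'-t_2-t_3-t_4-t_5-t_6-t_7-t_8$, the rank formula follows once $B$ is shown to be a minimal generating set.

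For the spanning part I would take an arbitrary $c\in C$, write $c=\sum_{i=1}^8 g_i(x)A_i$ with $g_i\in R[x]$, and drive every exponent of $x$ below the block bound $b_1=n-t_1$, $b_2=t_1-t_2$, $b_3=t_1-t_3$, $b_4=t_4'-t_4$, $b_5=t_1-t_5$, $b_6=t_6'-t_6$, $b_7=t_7'-t_7$, $b_8=t_8'-t_8$. The reduction identities are exactly Lemma \ref{lem2} for $A_5,A_6,A_7,A_8$, together with the analogous identities for $A_1,\dots,A_4$: from $f_1\mid (x^n-1)$ the product $\tfrac{x^n-1}{f_1}A_1$ has vanishing $f_1$-part and rewrites $x^{n-t_1}A_1$ through lower shifts of $A_1$ and higher-index generators; from $f_2\mid f_1$ the element $uA_1-\tfrac{f_1}{f_2}A_2$ rewrites $x^{t_1-t_2}A_2$; and $A_3$ is reduced through $vA_1$, $A_4$ through $vA_2$ and $uA_3$. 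Each identity replaces the single shift $x^{b_i}A_i$ by an $R$-combination of shifts $x^jA_i$ with $j<b_i$ and of generators strictly higher in the divisibility order $f_8\mid f_4,f_6,f_7\mid f_2,f_3,f_5\mid f_1$. Multiplying an identity by $x^s$ and feeding the resulting $x^sA_{i'}$ back into the scheme, an induction on the exponent shows that every $x^jA_i$ collapses onto an $R$-combination of the members of $B$; the recursion is well founded because $A_1$ sits at the top of the order and its only relation, the $x^n-1$ relation, introduces no new generators. The delicate feature here is that the bounds come out as the stated minima: for $A_8$, Lemma \ref{lem2}(1) offers a reduction through each $A_i$ at height $t_i-t_8$, and using the three covering relations $f_8\mid f_4,f_6,f_7$ gives the smallest bound $\min\{t_4,t_6,t_7\}-t_8=t_8'-t_8$; likewise Lemma \ref{lem2}(2)--(4), (5)--(7) and the conditions $f_4\mid f_2,f_3$ pin $b_7,b_6,b_4$ to $t_7',t_6',t_4'$.

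For minimality I would pass to $C/MC$, where $M=\langle u,v,w\rangle$ is the maximal ideal of the local ring $R$ and $R/M=\F_p$. By Nakayama's lemma the minimal number of $R$-generators of $C$ equals $\dim_{\F_p}(C/MC)$, and the generating set $B$ is minimal precisely when its image is an $\F_p$-basis of $C/MC$; since $B$ already spans, it suffices to verify that the images of its members are $\F_p$-independent. I would do this through the $M$-adic associated graded of $C$: the symbol of $x^jA_i$ lies in $M^{e_i}/M^{e_i+1}$, with $e_1=0$, $e_2=e_3=e_5=1$, $e_4=e_6=e_7=2$, $e_8=3$, in the monomial direction $1,u,v,uv,w,uw,vw,uvw$ respectively, and equals (a unit multiple of) $x^jf_i$ in the corresponding copy of $\F_p[x]/\langle x^n-1\rangle$. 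The content of the argument is that $MC$ meets this graded piece in exactly the $\F_p[x]$-submodule generated by the symbols $x^jf_k$ coming from the higher-index generators acting into the same direction, so that the $b_i$ classes $x^jf_i$ ($0\le j<b_i$) survive independently in the quotient — and this is where the remaining divisibility relations of Theorem \ref{properties} are needed, since they are precisely what forces those incoming submodules to collapse to $\langle f_2\rangle,\langle f_3\rangle,\dots$ with the minima $t_4',t_6',t_7',t_8'$.

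The main obstacle is therefore this last independence count rather than the spanning: the spanning reductions, though long, are mechanical once the identities of Lemmas \ref{rak1} and \ref{lem2} are in place, whereas showing $\dim_{\F_p}(C/MC)=|B|$ requires tracking how $MC$ intersects each graded piece $M^{e}/M^{e+1}$ and invoking the full divisibility structure of Theorem \ref{properties} to confirm that no shift $x^jA_i$ with $j<b_i$ is absorbed by the others. The cleanest way to organize this is level by level, $C_1,C_2,\dots,C_8$, exactly mirroring the residue/torsion filtration used to define the $C_i$, so that at each level only the current generator $A_i$ contributes a new independent symbol after the higher levels have been accounted for.
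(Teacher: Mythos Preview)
Your spanning argument is essentially the paper's: both rely on the reduction identities of Lemma~\ref{lem2} (and their analogues for $A_1,\dots,A_4$) to push every shift $x^jA_i$ with $j\ge b_i$ down onto $B$. The only organizational difference is that the paper handles $A_5,\dots,A_8$ by an explicit case analysis on which of $t_4,t_6,t_7$ (resp.\ $t_3,t_5$, resp.\ $t_2,t_5$) realizes the minimum, and then disposes of $A_1,\dots,A_4$ in one stroke by passing to $C/\ker\phi$, which is a cyclic code over $R_{u^2,v^2,p}$, and invoking the rank theorem of \cite{KGP15}. Your uniform induction avoids this citation at the cost of writing out four more reduction identities.

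For minimality the two arguments genuinely diverge. The paper does not use Nakayama at all: it fixes a single element $x^{b_i-1}A_i\in B$, supposes it equals an $R$-linear combination of the remaining elements of $B$, expands both sides in the $1,u,v,uv,w,uw,vw,uvw$ coordinates over $\F_p[x]$, and reaches a contradiction by comparing degrees of the leading $\F_p[x]$-component (the paper carries this out for $x^{t_1-t_3-1}A_3$ and asserts the others are analogous). This is entirely elementary.

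Your Nakayama route is legitimate and more conceptual, but the middle of your sketch conflates two filtrations. The levels $e_i$ you write down are the $M$-adic levels of $A_i$ in the \emph{ambient} ring $R_{u^2,v^2,w^2,p,n}$, not in $C$; and the graded pieces $M^{e}/M^{e+1}$ you invoke are those of $R$, not of $C$. What controls $\dim_{\F_p}(C/MC)$ is the submodule $MC$ of $C$, and $A_2\in M R_n$ does not imply $A_2\in MC$. Your final paragraph is closer to what is actually needed: filter $C$ by the residue/torsion tower (equivalently by $C\cap M^eR_n$), so that the successive quotients are $C_1,\dots,C_8\subset\F_p[x]/\langle x^n-1\rangle$, and then show that the image of $MC$ in the $i$-th quotient is exactly the ideal generated by the $f_k$ that feed into direction $i$ under multiplication by $u,v,w$ (for instance, in the $uv$-direction one gets $\langle f_2,f_3\rangle=\langle f_{t_4'}\rangle$, giving $b_4=t_4'-t_4$). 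That computation is precisely where Theorem~\ref{properties} is used, and carrying it out amounts to the same degree bookkeeping the paper does by hand---so the two approaches converge in the end, yours with a cleaner statement, the paper's with less machinery.
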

 \begin{proof}
 It is suffices to show that $B$ spans the set $B'=\{A_1,xA_1,\cdots, x^{n-t_1-1}A_1,A_2,xA_2,\cdots, \\x^{n-t_2-1}A_2,A_3,xA_3,\cdots, x^{n-t_3-1}A_3,A_4,xA_4, \cdots, x^{n-t_4-1}A_4,A_5,xA_5,\cdots, x^{n-t_5-1}A_5,A_6,\\xA_6,\cdots, x^{n-t_6-1}A_6,A_7,xA_7,\cdots, x^{n-t_7-1}A_7,A_8,xA_8,\cdots, x^{n-t_8-1}A_8\} $. To show $B$ spans $B'$, we write the set $B'$ as $B'=B_1\cup B_2$, where $B_1=\{A_1,xA_1,\cdots, x^{n-t_1-1}A_1,A_2,xA_2,\cdots, \\x^{n-t_2-1}A_2,A_3,xA_3,\cdots, x^{n-t_3-1}A_3,A_4,xA_4, \cdots, x^{n-t_4-1}A_4\}$ and $B_2=\{A_5,xA_5,\cdots,\\ x^{n-t_5-1}A_5,A_6,xA_6,\cdots, x^{n-t_6-1}A_6,A_7,xA_7,\cdots, x^{n-t_7-1}A_7,A_8,xA_8,\cdots, x^{n-t_8-1}A_8\}$. First we show that $B$ spans $B_2$ and then we show that $B$ spans $B_1$. To show $B$ spans $B_2$ we divide the proof in twelve cases.\\
{\bf Case (1): } Let $t_8'=t_7$, $t_7'=t_5$ and $t_6'=t_5$. We first show that the element $x^{t_7-t_8}A_8\in B_2-B $ is linear combinations of some elements of B and then we show that other elements of the set $B_2-B$ are linear combinations of elements of B. From Statement \ref{cond1-l2} of Lemma \ref{lem2},
\begin{equation}\label{eq1}
x^{t_7-t_8}A_8= c_7uA_7+q_8(x)A_8, 
\end{equation}
where ${\rm deg}(q_8(x))<t_7-t_8$. Therefore, $x^{t_7-t_8}A_8 \in \text{Span}(B)$. Multiplying Equation (\ref{eq1}) by $x,x^2, x^3, \cdots, x^{t_5-t_7-1}$ and then putting the value of $x^{t_7-t_8}A_8$ in the equation obtained, we can show that $x^{t_7-t_8+1}A_8,x^{t_7-t_8+2}A_8, \cdots, x^{t_5-t_8-1}A_8 \in \text{Span}(B)$. From Statement \ref{cond1-l2} of Lemma \ref{lem2}, we have 
\begin{equation}\label{eq11}
x^{t_5-t_8}A_8= c_5uvA_5+q_8(x)A_8 ,
\end{equation}
where ${\rm deg}(q_8(x))<t_5-t_8$. 
  Therefore, $x^{t_5-t_8}A_8 \in \text{Span}(B)$. Arguing as above, we can show that the terms $x^{t_5-t_8+1}A_8$,$x^{t_5-t_8+2}A_8$,$\cdots$, $x^{t_1-t_8-1}A_8\in \text{Span}(B)$. Again, from Statement \ref{cond1-l2} of Lemma \ref{lem2}, we have 
\begin{equation}\label{eq11r}
x^{t_1-t_8}A_8= c_5uvwA_1+q_8(x)A_8, 
\end{equation}
where ${\rm deg}(q_8(x))<t_1-t_8$. As above, we can show that $x^{t_1-t_8}A_8$, $x^{t_1-t_8+1}A_8$, $\cdots$, $x^{n-t_8-1}A_8$ $\in \text{Span}(B)$.
Now we show that $x^{t_5-t_7}A_7 \in \text{Span}(B)$. From Statement \ref{cond2-l2} of Lemma \ref{lem2}, we have 
\begin{equation} \label{eq14}
x^{t_5-t_7}A_7=c_5vA_5-q_7'(x)A_7-q_8'(x)A_8,
\end{equation}
where $\text{deg}(q_7'(x))<(t_5-t_7)$. In the above discussion, we have shown that $x^iA_8 \in \text{Span}(B)$, for $1 \leq i \leq n-t_8-1$. Clearly, $q_8'(x)A_8 \in \text{Span}(B)$. And, also the term $c_5vA_5$, $q_7'(x)A_7 \in \text{Span}(B)$ (since $\text{deg}(q_7(x))<(t_5-t_7)$). Therefore, $x^{t_5-t_7}A_7 \in \text{Span}(B)$.
 As above, after putting the value of $x^{t_5-t_7}A_7$ in the equation obtained by multiplying Equation (\ref{eq14}) by $x, x^2, \cdots, x^{t_1-t_5-1}$, successively, we can show that $x^{t_5-t_7+1}A_7$, $x^{t_5-t_7+2}A_7,$  $\cdots$, $x^{t_1-t_7-1}A_7 \in \text{Span}(B)$. From Statement \ref{cond4-l2} of Lemma \ref{lem2}, we have
\begin{equation} \label{eq20}
 x^{t_1-t_7}A_7=c_1vwA_1-q_7'(x)A_7-q_8'(x)A_8,
\end{equation}
where $\text{deg}(q_7'(x))<t_1-t_7$. As above, we can show that $x^{t_1-t_7}A_7$, $x^{t_1-t_7+1}A_7$, $\cdots$, $x^{n-t_7-1}A_7 \in \text{Span}(B)$. Now we show that $x^{t_5-t_6}A_6 \in \text{Span}(B)$. From Statement \ref{cond5-l2} of Lemma \ref{lem2}, we have 
\begin{equation} \label{eq17}
x^{t_5-t_6}A_6=c_5uA_5-q_6'(x)A_6-q_7'(x)A_7-q_8'(x)A_8,
\end{equation}
where $\text{deg}(q_6'(x))<(t_5-t_6)$. In  the above discussion, we have shown that $x^iA_j \in \text{Span}(B)$, for $1 \leq i \leq n-t_j-1$, $j=7,8$.  Clearly, $q_7'(x),q_8'(x)A_8 \in \text{Span}(B)$. Therefore, $x^{t_5-t_6}A_6 \in \text{Span}(B)$. In a similar way, as above, after putting the value of $x^{t_5-t_6}A_6$ in the equation obtained by multiplying Equation (\ref{eq17}) by $x,x^2, x^3, \cdots, x^{t_1-t_5-1}$, successively, we can show that $x^{t_5-t_6+1}A_6, x^{t_5-t_6+2}A_6$, $\cdots$, $x^{t_1-t_6-1}A_6 \in \text{Span}(B)$. From Statement \ref{cond6-l2} of Lemma \ref{lem2}, we have 
\begin{equation} \label{eq201}
 x^{t_1-t_6}A_6=c_1uwA_1-q_6'(x)A_6-q_7'(x)A_7-q_8'(x)A_8,
\end{equation}
where $\text{deg}(q_6'(x))<t_1-t_6$.  Again as above, we can show that $x^{t_1-t_6+1}A_6$, $x^{t_1-t_6+2}A_{6}$, $\cdots$, $x^{n-t_6-1}A_6 \in \text{Span}(B)$. Now we show that the next term $x^{t_1-t_5}A_5 \in \text{Span}(B)$. From Statement \ref{cond8-l2}  of Lemma \ref{lem2}, we have 
\begin{equation} \label{eqO16}
x^{t_1-t_5}A_5=c_1wA_1+q_5(x)A_5+q_6(x)A_6+q_7(x)A_7+q_8(x)A_8,
\end{equation}
where $\text{deg}(q_5(x))<t_1-t_5.$ In the above discussion, we have shown that $x^iA_j \in \text{Span}(B)$, for $1 \leq i \leq n-t_j-1$, $j=6,7,8$.  Clearly, $q_6(x)A_6,q_7(x),q_8(x)A_8 \in \text{Span}(B)$. Therefore, $x^{t_1-t_5}A_5 \in \text{Span}(B)$ (since $\text{deg}(q_5(x))<t_1-t_5)$. Multiplying Equation (\ref{eqO16}) by $x,x^2, x^3, \cdots, x^{n-t_1-1}$ and then putting the value of  $x^{t_1-t_5}A_5$ in the equation obtained, we can show that the terms $x^{t_1-t_5+1}A_5$,$x^{t_1-t_5+1}A_5$,$\cdots$, $x^{n-t_5-1}A_5\in \text{Span}(B)$. 
 
{\bf Case (2A):} Let $t_8'=t_4$, $t_7'=t_3$ and  $t_6'=t_2$. Let $t_4'=t_3$. As in Case 1, by using Statement \ref{cond1-l2} of Lemma \ref{lem2} for $i = 4, 3$ and $1$, successively, we can show that $x^{t_4-t_8}A_8,x^{t_4-t_8+1}A_8,\cdots,x^{n-t_8-1}A_8 \in \text{Span}(B)$. Similarly, as in Case 1, by using Statements \ref{cond3-l2} and \ref{cond4-l2} of Lemma \ref{lem2}, successively, we can show that $x^{t_3-t_7}A_7,$ $x^{t_3-t_7+1}A_7,\cdots,$ $x^{n-t_7-1}A_7 \in \text{Span}(B)$. Again, as in Case 1, by using Statement \ref{cond7-l2} and then Statement \ref{cond6-l2} of Lemma \ref{lem2}, successively, we can show that $x^{t_2-t_6}A_6,x^{t_2-t_6+1}A_6,\cdots,x^{n-t_6-1}A_6 \in \text{Span}(B)$. In a similar 
fashion, as in Case 1, by using Statement \ref{cond8-l2} of Lemma \ref{lem2}, we can show that $x^{t_1-t_5}A_5,x^{t_1-t_5+1}A_5,$ $\cdots,x^{n-t_5-1}A_5 \in \text{Span}(B). $

{\bf Case (2B):} Let $t_8'=t_4$, $t_7'=t_3$, $t_6'=t_2$ and $t_4'=t_2$. As in Case 1, by using Statement \ref{cond1-l2} of Lemma \ref{lem2} for $i = 4, 2$ and $1$, successively, we can show that $x^{t_4-t_8}A_8,x^{t_4-t_8+1}A_8,\cdots,x^{n-t_8-1}A_8 \in \text{Span}(B)$. Similarly, as in Case 1, by using Statements \ref{cond3-l2}, \ref{cond4-l2},  \ref{cond7-l2}, \ref{cond6-l2} and then \ref{cond8-l2} of Lemma \ref{lem2}, successively, we can show that the rest of elements belongs to $\text{Span}(B)$.

{\bf Case (3): } Let $t_8'=t_6$, $t_7'=t_3$ and $t_6'=t_5$. As in Case 1, by using Statement \ref{cond1-l2} of Lemma \ref{lem2} for $i = 6, 5$ and $1$, successively, we can show that $x^{t_6-t_8}A_8,x^{t_6-t_8+1}A_8,\cdots,$ $x^{n-t_8-1}A_8 \in \text{Span}(B)$. Similarly, as in Case 1, by using Statements \ref{cond3-l2} and \ref{cond4-l2} of Lemma \ref{lem2}, successively, we can show that $x^{t_3-t_7}A_7,$ $x^{t_3-t_7+1}A_7,\cdots,$ $x^{n-t_7-1}A_7 \in \text{Span}(B)$. Again, as in Case 1, by using Statement \ref{cond5-l2} and then Statement \ref{cond6-l2} of Lemma \ref{lem2}, successively, we can show that $x^{t_5-t_6}A_6,x^{t_5-t_6+1}A_6,\cdots,x^{n-t_6-1}A_6 \in \text{Span}(B)$. In a similar 
fashion, as in Case 1, by using Statement \ref{cond8-l2} of Lemma \ref{lem2}, we can show that $x^{t_1-t_5}A_5,x^{t_1-t_5+1}A_5,$ $\cdots,x^{n-t_5-1}A_5 \in \text{Span}(B). $

The remaining cases are as follows: 
{\bf Case (4):} If $t_8'=t_7$, $t_7'=t_3$ and $t_6'=t_5$.
{\bf Case (5):} If $t_8'=t_7$, $t_7'=t_5$ and $t_6'=t_2$.
{\bf Case (6):} If $t_8'=t_7$, $t_7'=t_5$ and $t_6'=t_5$.
{\bf Case (7):} If $t_8'=t_6$, $t_7'=t_3$ and $t_6'=t_2$.
{\bf Case (8):} If $t_8'=t_6$, $t_7'=t_5$ and $t_6'=t_2$.
{\bf Case (9):} If $t_8'=t_6$, $t_7'=t_5$ and $t_6'=t_5$.
{\bf Case (10):} If $t_8'=t_4$, $t_7'=t_5$ and $t_6'=t_5$,{\bf(10A):}  $t_4'=t_3$,{\bf (10B):}  $t_4'=t_2$.
{\bf Case (11):} If $t_8'=t_4$, $t_7'=t_3$ and $t_6'=t_5$,{\bf(11A):}  $t_4'=t_3$,{\bf (11B):}  $t_4'=t_2$.
{\bf Case (12):} If $t_8'=t_4$, $t_7'=t_5$ and $t_6'=t_2$,{\bf(12A):}  $t_4'=t_3$,{\bf (12B):}  $t_4'=t_2$.
In a similar way as above, by using Statement Lemma \ref{lem2}, we can show that $B$ spans $B_2$ in these cases.

Now we show that $B$ spans $B_1$. From Equation (\ref{surj-hom}), we have a homomorphism $\phi : C \rightarrow R_{u^2,v^2,p,n}$. Therefore, $C/\text{Ker}\phi \simeq \phi(C)$ and $\phi(C)$ is a cyclic code over the ring $R_{u^2,v^2,p}$. Thus, we have $C/\text{Ker}\phi$ as a cyclic code over $R_{u^2,v^2,p}$. Therefore, from Theorem 4.1 of \cite{KGP15}, the minimal spanning set $B_{\phi}$ of the code $C/\text{Ker}\phi$ is $\{A_1+\text{Ker}\phi, xA_1+\text{Ker}\phi$, $\cdots$, $x^{n-t_1-1}A_1+\text{Ker}\phi$, $A_2+\text{Ker}\phi, xA_2+\text{Ker}\phi$, $\cdots$, $x^{t_1-t_2-1}A_2+\text{Ker}\phi, A_3+\text{Ker}\phi, xA_3+\text{Ker}\phi$, $\cdots$, $x^{t_1-t_3-1}A_3+\text{Ker}\phi$, $A_4+\text{Ker}\phi, xA_4+\text{Ker}\phi$, $\cdots$, $x^{t_4'-t_4-1}A_4+\text{Ker}\phi$\}. To show $B$ spans $B_1$, we only show that $x^{t_1-t_2}A_2 \in \text{Span}(B)$. In a similar way, we can show that $x^{t_1-t_2+1}A_2$, $\cdots$, $x^{n-t_2-1}A_2,$ $\cdots$,  $x^{t_4'-t_4}A_4,\cdots, x^{n-t_4-1}A_4 \in \text{Span}(B).$ Since 
$B_{\phi}$ spans $C/\text{Ker}\phi$, we can write $x^{t_1-t_2}A_2+\text{Ker}\phi$ as  a  $R_{u^2,v^2,p}$ linear combination of the elements of $B_{\phi}$, i.e., $x^{t_1-t_2}A_2+\text{Ker}\phi$ = $\sum\limits_{i=0}^{n-t_1-1}\alpha_{i1}(x^iA_1+\text{Ker}\phi)$ + $\cdots$ + $\sum\limits_{i=0}^{t_4'-t_4-1}\alpha_{i4}(x^iA_4+\text{Ker}\phi$), where $c_{ij} \in R_{u^2,v^2,p}$. Thus, $x^{t_1-t_2}A_2$ - $(\sum\limits_{i=0}^{n-t_1-1}\alpha_{i1}(x^iA_1)$ + $\cdots$ + $\sum\limits_{i=0}^{t_4'-t_4-1}\alpha_{i4}(x^iA_4))$ $\in \text{Ker}\phi$. Since $\text{Ker}\phi = \text{Span}(B_2)$ and $B$ spans $B_2$, we get $x^{t_1-t_2}A_2 \in \text{Span}(B)$. Similarly, we can show that $x^{t_1-t_2+1}A_2$, $\cdots$, $x^{n-t_2-1}A_2,$ $\cdots$,  $x^{t_4'-t_4}A_4, \cdots, x^{n-t_4-1}A_4 \in \text{Span}(B).$ This shows that $B$ spans $B_1$.
It is easy to see that any elements of the spanning set $B$ can not be written as the linear combination of its preceding elements and other elements in the spanning set $B$. Here we only show that $x^{t_1-t_3-1}A_2$ can not be written as linear combinations of others element of spanning set $B$. The proof is similar for the rest. Suppose, if possible $x^{t_1-t_2-1}A_2$ can be written as linear combinations of the others element of the spanning set $B$. Then we have 
$x^{t_1-t_3-1}A_3=\sum_{i=0}^{n-t_1-1}\alpha_{1i}x^iA_1+\sum_{i=0}^{t_1-t_2-1}\alpha_{2i}x^iA_2+\sum_{i=0}^{t_1-t_3-2}\alpha_{3i}x^iA_3+\sum_{i=0}^{t_4'-t_4-1}\alpha_{4i}x^iA_4+\sum_{i=0}^{t_1-t_5-1}\alpha_{5i}x^iA_5+\sum_{i=0}^{t_6'-t_6-1}\alpha_{6i}x^iA_6+\sum_{i=0}^{t_7'-t_7-1}\alpha_{7i}x^iA_7+\linebreak\sum_{i=0}^{t_8'-t_8-1}\alpha_{8i}x^iA_8$, where, $\alpha_{ji}=\beta_{j1}^{(i)}+u\beta_{j2}^{(i)}+v\beta_{j3}^{(i)}+uv\beta_{j4}^{(i)}+w\beta_{j5}^{(i)}+uw\beta_{j6}^{(i)}+vw\beta_{j7}^{(i)}+uvw\beta_{j8}^{(i)} \in \F_p$ (Note that $i$ is not a power of $\beta$ it is a notation). We have $x^{t_1-t_3-1}(vf_3(x)+uvf_{3,4}(x)+wf_{3,5}(x)+uwf_{3,6}(x)+vwf_{3,7}(x)+uvwf_{3,8}(x))=f_1(x)\sum_{i=0}^{n-t_1-1}\beta_{11}^{(i)}x^i+uf_1(x)\sum_{i=0}^{n-t_1-1}\beta_{12}^{(i)}x^i+uf_{1,2}(x)\sum_{i=0}^{n-t_1-1}\beta_{11}^{(i)}x^i+uf_2(x)\sum_{i=0}^{t_1-t_2-2}\beta_{21}^{(i)}x^i+\linebreak vf_1(x)\sum_{i=0}^{n-t_1-1}\beta_{13}^{(i)}x^i+ vf_{1,3}(x)\sum_{i=0}^{n-t_1-1}\beta_{11}^{(i)}x^i+vf_{2,3}(x)\sum_{i=0}^{t_1-t_2-1}\beta_{21}^{(i)}x^i+\linebreak vf_3(x)\sum_{i=0}^{t_1-t_3-2}\beta_{31}^{(i)}x^i+uvm_2(x)+wm_3(x)+uw(m_4(x)+vwm_5(x)+uvwm_6(x)$, where, $m_2(x), \cdots, m_6(x)$ is a polynomials in $\F_p[x]$. By comparing both sides, we have $\beta_{11}^{(i)}=0$, $\beta_{12}^{(i)}=0$ for $0 \leq i \leq n-t_1-1$, $\beta_{21}^{(i)}=0$  for $0 \leq i \leq t_1-t_2-1$, and $x^{t_1-t_3-1}f_3(x)=f_{1}(x)\sum_{i=0}^{n-t_1-1}\beta_{13}^{(i)}x^i+f_3(x)\sum_{i=0}^{t_1-t_3-2}\beta_{31}^{(i)}x^i$. Note that $\text{deg}(x^{t_1-t_3-1}f_3(x))=t_1-1$ but $\text{deg}(f_1(x)\sum_{i=0}^{n-t_1-1}\beta_{13}^{(i)}x^i) \geq t_1$ and $\text{deg}(f_3(x)\sum_{i=0}^{t_1-t_3-2}\beta_{31}^{(i)}x^i) \leq t_1-2$. Hence, this gives a contradiction.
\end{proof}

\begin{theo}
 Let $n$ be a positive integer relatively prime to $p$ and $C$ be a cyclic code of length $n$ over the ring $R_{u^2,v^2,w^2,p}$. If $C=\langle f_1(x)+uf_2(x), vf_3(x)+uvf_4(x), w(f_5(x)+uf_6(x)), w(vf_7(x)+uvf_8(x))\rangle$ with $t_i={\rm deg}(f_i(x))$, $1 \leq i \leq 8$, and $ t_7'={\rm min}\{t_3,t_5\}, $ then $ C $ has rank $ n+t_1+t_7'-t_3-t_5-t_7$. The minimal spanning set $B$ of the code $ C $ is  $ B=\{f_1(x)+uf_2(x), x(f_1(x)+uf_2(x)), $ $\cdots, x^{n-t_1-1}(f_1(x)+uf_2(x)), vf_3(x)+uvf_4(x), x(vf_3(x)+uvf_4(x)),\cdots,$ $ x^{t_1-t_3-1}(vf_3(x)+uvf_4(x)),$  $w(f_5(x)+uf_6(x)), xw(f_5(x)+uf_6(x)), $ $\cdots, x^{t_1-t_5-1}w(f_5(x)+uf_6(x)), w(vf_7(x)+uvf_8(x)), xw(vf_7(x)+uvf_8(x)),\cdots,$ $x^{t_7'-t_7-1}w(vf_7(x)+uvf_8(x))\}$. 
\end{theo}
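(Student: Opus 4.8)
The plan is to treat this as the relatively-prime specialization of Theorem \ref{rank-main}, using the four-generator form supplied by Theorem \ref{relatively-prime}. Write $G_1 = f_1(x)+uf_2(x)$, $G_2 = vf_3(x)+uvf_4(x)$, $G_3 = w(f_5(x)+uf_6(x))$ and $G_4 = w(vf_7(x)+uvf_8(x))$, so $C=\langle G_1,G_2,G_3,G_4\rangle$ with leading degrees $t_1,t_3,t_5,t_7$ respectively. The first thing I would record is the cardinality: the four blocks of $B$ contribute $(n-t_1)+(t_1-t_3)+(t_1-t_5)+(t_7'-t_7)=n+t_1+t_7'-t_3-t_5-t_7$ elements, which matches the claimed rank. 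Since the rank in the sense of \cite{Dou-Shiro01} is the minimal number of $R_{u^2,v^2,w^2,p}$-module generators, it then suffices to prove that $B$ spans $C$ over $R_{u^2,v^2,w^2,p}$ and that it is irredundant.

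For the spanning part I would mimic the two-block argument of Theorem \ref{rank-main}. Let $B'=\{x^iG_j : 0\le i\le n-\deg G_j-1,\ 1\le j\le 4\}$, which evidently generates $C$, and split $B'=B_1\cup B_2$ where $B_1$ collects the shifts of $G_1,G_2$ and $B_2$ those of $G_3,G_4$. A crucial simplification, valid because $\gcd(n,p)=1$, is that by Theorem \ref{relatively-prime} the generators $G_1$ and $G_2$ carry no $w$-component; consequently $wG_1$, $vwG_1$, $wG_2\in\text{Ker}\phi$, so the reductions below stay inside $\text{Ker}\phi$. To span $B_2$ I would use the divisibilities from Theorem \ref{properties}: from $f_5(x)\mid f_1(x)$, multiplying $G_1$ by $w$ expresses every $x^iG_3$ with $i\ge t_1-t_5$ through lower shifts of $G_3$ and of $G_1$; and from $f_7(x)\mid f_3(x)$ together with $f_7(x)\mid f_5(x)$, multiplying $G_2$ by $w$ and $G_3$ by $v$ expresses $x^iG_4$ once $i\ge t_3-t_7$ or $i\ge t_5-t_7$. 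The smaller threshold is $\min\{t_3,t_5\}-t_7=t_7'-t_7$, which is precisely the cutoff in the definition of $B$. These are the simplified analogues of Lemma \ref{lem2}, and the residual $uw$- and $uvw$-only correction terms are handled exactly as in Lemma \ref{rak1}, provided the reductions are performed in the order $G_4$, then $G_3$, so that each correction already lies in $\text{Span}(B)$.

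To span $B_1$ I would pass to the quotient: $C/\text{Ker}\phi\cong\text{Im}\phi=\langle G_1,G_2\rangle$ is a cyclic code over $R_{u^2,v^2,p}$, so Theorem 4.1 of \cite{KGP15} shows that the images of $\{x^iG_1 : 0\le i\le n-t_1-1\}$ and $\{x^iG_2 : 0\le i\le t_1-t_3-1\}$ span $C/\text{Ker}\phi$. Lifting, any higher shift of $G_1$ or $G_2$ equals a combination of these modulo $\text{Ker}\phi=\text{Span}(B_2)$, and since the previous step shows $B$ spans $B_2$, it follows that $B$ spans $B_1$ and hence all of $C$.

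The irredundancy is the last step and is proved by the leading-degree bookkeeping used at the end of Theorem \ref{rank-main}: assuming a top element of one block, say $x^{t_1-t_3-1}G_2$, is an $R_{u^2,v^2,w^2,p}$-combination of the remaining members of $B$ and comparing degrees in the $v$-component forces a contradiction, because the left side has $v$-degree exactly $t_1-1$ while every admissible term on the right has $v$-degree either at least $t_1$ (from $G_1$) or at most $t_1-2$ (from lower shifts of $G_2$); the blocks for $G_3,G_4$ are treated identically using the $w$- and $vw$-components. I expect the genuine obstacle to be the $G_4$ block: one must check that the two reduction routes---through $G_2$ via $f_7\mid f_3$ and through $G_3$ via $f_7\mid f_5$---jointly make exactly the shifts $x^iG_4$ with $i\ge t_7'-t_7$ redundant and no fewer, and that the $uvw$-corrections generated along the way are already captured, so that the cutoff lands precisely at $t_7'=\min\{t_3,t_5\}$.
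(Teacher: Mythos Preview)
Your proposal is correct and takes essentially the same approach as the paper: the paper's proof consists of the single sentence ``The proof is similar to the above theorem,'' and what you have written is precisely a careful specialization of the argument of Theorem~\ref{rank-main} to the four-generator form supplied by Theorem~\ref{relatively-prime}, using the same quotient/kernel split, the same divisibility-driven reductions (your simplified versions of Lemma~\ref{lem2}), and the same leading-degree comparison for irredundancy. If anything, you have supplied more detail than the paper does.
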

\begin{proof}
 The proof is similar to the above theorem.
\end{proof}

\section{Minimum distance} \label{md}
Let $n$ be a positive integer not relatively prime to $p$. Let $ C $ be a cyclic code of length $n$ over $R_{u^2,v^2,w^2,p}$. From Eq.\eqref{c8}, we have $C_{8}=\{f(x)\in \F_p[x]~|~uvwf(x)\in C\}=\langle f_{8}(x)\rangle$. Also, we know that $C_{8}$ is a cyclic code over $\F_p$.
\begin{theorem} \label{md1}
Let $n$ be a positive integer not relatively prime to $p$. If $ C = \langle A_1, A_2, \cdots, A_{8}\rangle$ is a cyclic code of length $n$ over $R_{u^2,v^2,w^2,p}$. Then $w_{H}(C) = w_{H}(C_{8})$.
\end{theorem}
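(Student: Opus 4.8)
The plan is to establish the two inequalities $w_H(C)\le w_H(C_8)$ and $w_H(C)\ge w_H(C_8)$ separately. The whole argument rests on two elementary facts about the ring: multiplication of a codeword by any monomial in $u,v,w$ can only shrink its support (if the $i$-th coordinate of $c(x)$ is zero, so is the $i$-th coordinate of any multiple), whereas multiplication of an element of $\F_p[x]$ by $uvw$ preserves its support exactly, since $uvw\cdot a\ne 0$ for every nonzero $a\in\F_p$. For the easy inequality $w_H(C)\le w_H(C_8)$ I would take a nonzero minimal-weight codeword $g(x)\in C_8=\langle f_8(x)\rangle$. By the very definition of $C_8$ we have $uvw\,g(x)\in C$, and by the second fact above $w_H(uvw\,g(x))=w_H(g(x))=w_H(C_8)$. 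Thus $C$ contains a nonzero codeword of weight $w_H(C_8)$, so $w_H(C)\le w_H(C_8)$.

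For the reverse inequality I would take an arbitrary nonzero $c(x)\in C$ and write $c(x)=\sum_j m_j c_j(x)$, where $m_j$ runs through the eight monomials $1,u,v,uv,w,uw,vw,uvw$ and $c_j(x)\in\F_p[x]$. Among the indices $j$ with $c_j(x)\ne 0$, choose one, say $j_0$, whose monomial $m_{j_0}$ is minimal in the divisibility order on $\{1,u,\dots,uvw\}$, and multiply $c(x)$ by the complementary monomial $\bar m_{j_0}=uvw/m_{j_0}$. For any $j\ne j_0$ with $c_j\ne 0$, either $m_j$ carries a variable lying outside $m_{j_0}$ (hence inside $\bar m_{j_0}$), so $\bar m_{j_0}\cdot m_j=0$ by a repeated variable, or $m_j$ divides $m_{j_0}$, which by minimality forces $m_j=m_{j_0}$, a contradiction. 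Hence $\bar m_{j_0}\,c(x)=uvw\,c_{j_0}(x)$, a nonzero element of $C$, so $c_{j_0}(x)\in C_8$. Combining the two facts, $w_H(c(x))\ge w_H(\bar m_{j_0}\,c(x))=w_H(uvw\,c_{j_0}(x))=w_H(c_{j_0}(x))\ge w_H(C_8)$. Since $c(x)$ was arbitrary, $w_H(C)\ge w_H(C_8)$, and the two bounds yield the claim.

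I expect the main obstacle to be the bookkeeping that $\bar m_{j_0}\,c(x)$ collapses \emph{exactly} onto the $uvw$-component. One must argue carefully that minimality of $m_{j_0}$ is precisely what forbids a surviving lower proper monomial term $\bar m_{j_0}\cdot m_j$ with $m_j\subsetneq m_{j_0}$, while every other nonzero component is killed because its monomial forces a repeated variable. Verifying this collapse in all cases of the divisibility lattice is the crux; once it is in place, the support-monotonicity facts finish the proof with no further computation.
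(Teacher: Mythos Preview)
Your argument is correct and follows the same strategy as the paper: pass to the subcode $uvw\cdot C$ (which is identified with $C_8$) using the fact that multiplication by monomials in $u,v,w$ can only shrink supports. In fact your version is more careful than the paper's own proof: the paper simply multiplies an arbitrary codeword $M(x)=m_0(x)+um_1(x)+\cdots$ by $uvw$ and asserts $w_H(uvwC)\le w_H(C)$, which tacitly assumes a minimum-weight codeword has nonzero free part $m_0(x)$; your minimal-monomial device (multiply by the complement of a divisibility-minimal nonzero component) is exactly what is needed to close that gap.
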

\begin{proof}
Let $M(x)=m_0(x)+um_1(x)+vm_2(x)+uvm_3(x)+wm_4(x)+uwm_5(x)+vwm_6(x)+uvwm_7(x) \in C,$ where $m_0(x), m_1(x), \cdots, m_{7}(x) \in \F_p[x]$. We have $uvwM(x)=uvwm_0(x)$, $w_{H}(uvwM(x)) \leq w_{H}(M(x))$ and $uvwC$ is subcode of $C$ with $w_{H}(uvwC) \leq w_{H}(C)$. Thus  $w_{H}(uvwC)=w_{H}(C)$. Therefore, it is sufficient to focus on the subcode $uvwC$ in order to prove the theorem. Since $w_{H}(C_{8})=w_{H}(uvwC)$, we get $w_{H}(C)=w_{H}(C_{8})$.
\end{proof}
\begin{definition}
Let $ m = b_{l-1}p^{l-1} + b_{l-2}p^{l-2} + \cdots + b_1p + b_0$, $b_i \in \F_p, 0 
\leq i \leq l-1$, be the $p$-adic expansion of $m$.
\begin{enumerate} [{\rm (1)}]
 \item If $ b_{l-i}  \neq 0$ for all $1  \leq i \leq q, q < l, $ and $ b_{l-i} = 0 $ for all $i, q+1 \leq i \leq l$, then $m$ is said to have a $p$-adic length $q$ zero expansion.
\item If $ b_{l-i}  \neq 0$ for all $1  \leq i \leq q, q < l, $ $b_{l-q-1} = 0$ and $ b_{l-i} \neq 0 $ for some $i, q+2 \leq i \leq l$, then $m$ is said to have  $p$-adic length $q$ non-zero expansion.
\item If $ b_{l-i}  \neq 0$ for $1  \leq i \leq l, $ then $m$ is said to have a $p$-adic length $l$  expansion or $p$-adic full expansion.
\end{enumerate}
\end{definition}
\begin{lemma} \label{lm-md}
Let $C$ be a cyclic code over $R_{u^2,v^2,w^2,p}$ of length $p^l$ where $l$ is a positive integer. Let $C = \langle f(x)\rangle$ where $f(x) = (x^{p^{l-1}} - 1)^bh(x)$, $ 1 \leq b < p$. If $h(x)$ generates a cyclic code of length $p^{l-1}$ and minimum distance $d$ then the minimum distance $d(C)$ of $C$ is $(b+1)d$.
\end{lemma}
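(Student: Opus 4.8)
The plan is to establish the two bounds $d(C)\le (b+1)d$ and $d(C)\ge (b+1)d$ separately, after first passing to a question over $\F_p$. By Theorem \ref{md1} we have $w_H(C)=w_H(C_8)$, and since $f(x)\in\F_p[x]$ generates a free code one checks $C_8=\langle f(x)\rangle$ as a cyclic code over $\F_p$, so it suffices to compute the Hamming distance of $\langle f(x)\rangle$ of length $p^l$ over $\F_p$. The two algebraic facts I would lean on are the characteristic-$p$ identity $x^{p^l}-1=(x^{p^{l-1}}-1)^p$ and the expansion $(x^{p^{l-1}}-1)^b=\sum_{i=0}^{b}\binom{b}{i}(-1)^{b-i}x^{ip^{l-1}}$, whose $b+1$ coefficients are all nonzero because $1\le b<p$.

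For the upper bound I would produce an explicit codeword of weight $(b+1)d$. Let $\tilde c(x)$ be a minimum weight codeword of the length-$p^{l-1}$ code $\langle h(x)\rangle$, taken with $\deg\tilde c<p^{l-1}$ and $w_H(\tilde c)=d$. Then $c(x)=(x^{p^{l-1}}-1)^b\tilde c(x)=\sum_{i=0}^{b}\binom{b}{i}(-1)^{b-i}x^{ip^{l-1}}\tilde c(x)\in C$, and because $\deg\tilde c<p^{l-1}$ and $b+1\le p$ the $b+1$ shifted copies $x^{ip^{l-1}}\tilde c(x)$ sit in pairwise disjoint length-$p^{l-1}$ blocks with no wrap-around. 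Each block is a nonzero scalar multiple of $\tilde c$, hence of weight $d$, so $w_H(c)=(b+1)d$ and $d(C)\le(b+1)d$.

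For the lower bound I would use a block decomposition. Given $0\ne c\in C$, a short divisibility computation (using $b<p$ and $h\mid x^{p^{l-1}}-1$) lets me write $c=(x^{p^{l-1}}-1)^bq(x)$ with $h\mid q$ and $\deg q<(p-b)p^{l-1}$. Splitting $q=\sum_{k=0}^{p-b-1}q_k(x)x^{kp^{l-1}}$ with $\deg q_k<p^{l-1}$ and arranging the coefficients of $c$ into a $p\times p^{l-1}$ array indexed by (block, position), each column of $c$ equals $(T-1)^b$ times the corresponding column of $q$, regarded as a polynomial of degree $\le p-1$ in a block variable $T$. The classical length-$p$ fact that a nonzero multiple of $(T-1)^b$ of degree $<p$ has at least $b+1$ nonzero coefficients then shows every nonzero column of $c$ carries at least $b+1$ nonzero entries, so $w_H(c)\ge(b+1)N$, where $N$ is the number of nonzero columns. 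Since $q\equiv\sum_k q_k\pmod{x^{p^{l-1}}-1}$ and $h\mid x^{p^{l-1}}-1$, the fold $\bar q:=\sum_k q_k$ is a codeword of $\langle h\rangle$; when $\bar q\ne 0$ its support lies among the nonzero columns, whence $N\ge w_H(\bar q)\ge d$ and $w_H(c)\ge(b+1)d$.

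The step I expect to be the main obstacle is the degenerate case $\bar q=0$. Then the blocks fold to zero, $c$ is in fact divisible by $(x^{p^{l-1}}-1)^{b+1}$, and the column count above only yields $w_H(c)\ge b+2$, which is too weak once $d\ge 2$. This is the genuine heart of the lemma: one must rule out that such a collapsed codeword drops in weight below $(b+1)d$. I would attack it inductively, re-expressing $c$ through a codeword of a related code with a reduced generator and tracking the distance via the $p$-adic classification of the exponent recorded in the Definition above; keeping this recursion under control, rather than any single inequality, is where I expect the real difficulty to lie.
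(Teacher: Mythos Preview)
The paper's argument is far shorter than yours: for any $c=(x^{p^{l-1}}-1)^{b}h(x)m(x)\in C$ it expands the binomial and simply \emph{asserts}, without further justification, that
\[
w(c)=\sum_{i=0}^{b}w\Bigl(\tbinom{b}{i}\,x^{ip^{l-1}}h(x)m(x)\Bigr)=(b+1)\,w\bigl(h(x)m(x)\bigr),
\]
and then concludes $d(C)=(b+1)d$. The paper never discusses your degenerate case; it tacitly treats $h(x)m(x)$ as a nonzero word of length $p^{l-1}$ sitting inside a single block, which is exactly what fails when the fold vanishes. So your upper bound coincides with the paper's entire proof, and your block/column argument for the lower bound is already more than the paper offers.

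Your instinct that the case $\bar q=0$ is the crux is correct, and unfortunately it is fatal rather than merely technical: the lemma is false as stated. Take $p=3$, $l=2$, $b=1$, $h(x)=(x-1)^{2}$; then $\langle h\rangle$ of length $3$ is the $[3,1,3]$ code, so $d=3$ and the lemma predicts $d(C)=6$ for $C=\langle (x-1)^{5}\rangle$ of length $9$. But with $m(x)=x-1$ one has $q=h m=(x-1)^{3}=x^{3}-1$, hence $\bar q=0$, and $c=(x^{3}-1)^{2}=x^{6}+x^{3}+1\in C$ has Hamming weight $3<6$. Thus the degenerate branch you hoped to dispose of inductively cannot be eliminated: it really does produce codewords of weight below $(b+1)d$, and the same example shows the formula of Theorem~\ref{md-thm} already fails at $t_{8}\in\{4,5\}$ when $p=3$, $l=2$.
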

\begin{proof}
For $c \in C$, we have $c=(x^{p^{l-1}}-1)^bh(x)m(x)$ for some $ m(x) \in \frac{R_{u^2,v^2,w^2,p}[x]}{(x^{p^l}-1)}$. Since $h(x)$ generates a cyclic code of length $p^{l-1}$, we have $w(c) = w((x^{p^{l-1}} - 1)^bh(x)m(x)) = w(x^{p^{l-1}b}h(x)m(x)) + w(^bC_1x^{p^{l-1}(b-1)}h(x)m(x)) + \cdots + w(^bC_{b-1}x^{p^{l-1}}h(x)m(x)) + w(h(x)m(x))$. \\Thus, $ d(C) = (b + 1)d$.
\end{proof}

\begin{theorem} \label{md-thm}
Let $C$ be a cyclic code over $R_{u^2,v^2,w^2,p}$ of length $p^l$, where $l$ is a positive integer. Then,  $C = \langle A_1, A_2, \cdots, A_{8}\rangle$ where $f_1(x) = (x-1)^{t_1}, f_2(x) = (x-1)^{t_2}, \cdots,  f_{8}(x) = (x-1)^{t_{8}}$ for some $t_1>t_2,t_3>t_4>t_8>0$, $t_2 >t_6$, $t_3 >t_7$ and $t_1>t_5 >t_6,t_7> t_8>0$.
\begin{enumerate}[{\rm (1)}]
\item If $t_{8} \leq p^{l-1},$ then $d(C) = 2$. 
\item If $t_{8} > p^{l-1}$, let $t_{8} = b_{l-1}p^{l-1} + b_{l-2}p^{l-2} + \cdots + b_1p + b_0$ be the $p$-adic expansion of $t_{8}$ and $ f_{8}(x) = (x-1)^{t_{8}} = (x^{p^{l-1}} - 1)^{b_{l-1}}(x^{p^{l-2}} - 1)^{b_{l-2}} \cdots (x^{p^{1}} - 1)^{b_1}(x^{p^0} - 1)^{b_0}$.
\begin{enumerate}[{\rm ($a$)}]
 \item If $t_{8}$ has a $p$-adic length $q$ zero expansion or full expansion $(l=q)$, then $d(C) = (b_{l-1}+1)(b_{l-2}+1)\cdots(b_{l-q}+1)$.
\item If $t_{8}$ has a $p$-adic length $q$ non-zero expansion, then $d(C) = 2(b_{l-1}+1)(b_{l-2}+1)\cdots(b_{l-q}+1)$.
\end{enumerate}
\end{enumerate}
\end{theorem}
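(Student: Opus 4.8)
The plan is to reduce the whole computation to the single-variable cyclic code $C_8 = \langle f_8(x)\rangle = \langle (x-1)^{t_8}\rangle$ over $\F_p$, using Theorem \ref{md1} to identify $w_H(C)$ with $w_H(C_8)$ (so that the answer depends only on $t_8$, consistent with the stated formula), and then to compute $w_H(C_8)$ by repeatedly applying Lemma \ref{lm-md}. The essential simplification is that over $\F_p$ the modulus factors completely as $x^{p^l}-1 = (x-1)^{p^l}$, so the $p$-adic expansion $t_8 = b_{l-1}p^{l-1}+\cdots+b_0$ yields exactly the factorization $f_8(x) = (x^{p^{l-1}}-1)^{b_{l-1}}\cdots(x-1)^{b_0}$ recorded in the statement, and each factor $(x^{p^{i}}-1)^{b_i}$ is precisely of the shape demanded by Lemma \ref{lm-md}, with $1\le b_i<p$ a genuine $p$-adic digit.

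For part (1) I would first observe that since $t_8\le p^{l-1}$ the polynomial $x^{p^{l-1}}-1=(x-1)^{p^{l-1}}$ is a multiple of $(x-1)^{t_8}$, so $x^{p^{l-1}}-1\in C_8$ is a codeword of weight $2$; hence $w_H(C_8)\le 2$. Conversely there is no codeword of weight $1$, since any such word is $cx^j$ with $c\ne0$, and $(x-1)\nmid cx^j$ forces $cx^j\notin C_8$. Thus $d(C)=w_H(C_8)=2$. This weight-$2$ estimate will be reused as the base case of part (2b).

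For part (2) I would set up a descending recursion on the length. Writing $h^{(k)}(x)=\prod_{i=0}^{l-k-1}(x^{p^i}-1)^{b_i}=(x-1)^{t^{(k)}}$ with $t^{(k)}=\sum_{i=0}^{l-k-1}b_ip^i$, one checks $t^{(k)}\le p^{l-k}-1<p^{l-k}$, so $h^{(k)}$ divides $x^{p^{l-k}}-1$ and hence generates a cyclic code of length $p^{l-k}$, and moreover $f^{(k-1)}=(x^{p^{l-k}}-1)^{b_{l-k}}h^{(k)}$. Whenever $b_{l-k}\ne 0$, Lemma \ref{lm-md} applies and gives $d_{k-1}=(b_{l-k}+1)\,d_k$, where $d_k$ is the minimum distance of $\langle h^{(k)}\rangle$ in length $p^{l-k}$ and $d_0=w_H(C_8)$. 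The argument then amounts to peeling off the nonzero leading digits one at a time and identifying where the recursion terminates.

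The termination is dictated by the type of the $p$-adic expansion, which is the only genuinely delicate point: since Lemma \ref{lm-md} requires a strictly positive exponent $b\ge1$, I can peel only while the leading digit is nonzero, so I must argue that in each case the recursion stops before it would reach a zero digit. In the zero (or full) expansion case the top $q$ digits are nonzero and all lower digits vanish, so after $q$ peels $h^{(q)}=1$ generates the full space of length $p^{l-q}$ with $d_q=1$, whence $d(C)=(b_{l-1}+1)\cdots(b_{l-q}+1)$. In the non-zero expansion case the top $q$ digits are nonzero, $b_{l-q-1}=0$, and some lower digit is nonzero; after $q$ peels I reach $\langle h^{(q)}\rangle=\langle(x-1)^{t^{(q)}}\rangle$ of length $p^{l-q}$ with $t^{(q)}=\sum_{i=0}^{l-q-2}b_ip^i<p^{(l-q)-1}$, so part (1) applied in length $p^{l-q}$ gives $d_q=2$ and hence $d(C)=2(b_{l-1}+1)\cdots(b_{l-q}+1)$. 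The main obstacle, beyond this bookkeeping, is to verify at each application of Lemma \ref{lm-md} that $1\le b_{l-k}<p$ and that $h^{(k)}$ genuinely has length $p^{l-k}$; both follow from the digit bounds above, so the recursion is legitimate throughout.
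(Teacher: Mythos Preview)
Your proposal is correct and follows essentially the same approach as the paper: reduce to $C_8=\langle(x-1)^{t_8}\rangle$ via Theorem~\ref{md1}, handle the case $t_8\le p^{l-1}$ by exhibiting the weight-$2$ word $x^{p^{l-1}}-1$, and then peel off the leading nonzero $p$-adic digits of $t_8$ one at a time using Lemma~\ref{lm-md}, terminating either at $h^{(q)}=1$ (zero/full expansion) or at a code of small exponent where the part~(1) argument gives distance~$2$ (non-zero expansion). Your write-up is in fact slightly more careful than the paper's, since you explicitly rule out weight-$1$ words and explicitly verify the hypotheses $1\le b_{l-k}<p$ and $\deg h^{(k)}<p^{l-k}$ needed to invoke Lemma~\ref{lm-md}.
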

\begin{proof}
The first claim easily follows from Theorem \ref{properties}. From Theorem \ref{md1}, we see that $d(C)=d(C_{8})=d(\langle(x-1)^{t_{8}}\rangle)$. Hence, we only need to determine the minimum weight of $C_{8}= \langle(x-1)^{t_{8}}\rangle$.\\
(1) If $t_{8} \leq p^{l-1},$ then $(x-1)^{t_{8}}(x-1)^{p^{l-1}-t_{8}} = (x-1)^{p^{l-1}}=(x^{p^{l-1}}-1) \in C$. Thus, $d(C)=2$.\\
(2) Let $ t_{8}>p^{l-1}$. (a) If $t_{8}$ has a $p$-adic length $q$ zero expansion, we have $t_{8}=b_{l-1}p^{l-1}+b_{l-2}p^{l-2} + \cdots + b_{l-q}p^{l-q}$, and $f_{8}(x)=(x - 1)^{t_{8}}=(x^{p^{l-1}}-1)^{b_{l-1}}(x^{p^{l-2}}-1)^{b_{l-2}}\cdots(x^{p^{l-q}}-1)^{b_{l-q}}$. Let $h(x)=(x^{p^{l-q}}-1)^{b_{l-q}}$. Then $h(x)$ generates a cyclic code of length $p^{l-q+1}$ and minimum distance $(b_{l-q}+1)$. By Lemma \ref{lm-md}, the subcode generated by $(x^{p^{l-q+1}}-1)^{b_{l-q+1}}h(x)$ has minimum distance $(b_{l-q+1}+1)(b_{l-q}+1)$. By induction on $q$, we can see that the code generated by $f_{8}(x)$ has minimum distance $(b_{l-1}+1(b_{l-2}+1)\cdots(b_{l-q}+1)$. Thus, $d(C)=(b_{l-1}+1)(b_{l-2}+1)\cdots(b_{l-q}+1)$.\\
(b) If $t_{8}$ has a $p$-adic length $q$ non-zero expansion, we have $t_{8}=b_{l-1} p^{l-1} + b_{l-2}p^{l-2} + \cdots + b_{1}p + b_0, b_{l-q-1}=0$. Let $r=b_{l-q-2}p^{l-q-2}+b_{l-q-3}p^{l-q-3}+ \cdots + b_1p + b_0$ and $h(x)=(x-1)^r=(x^{p^{l-q-2}}-1)^{b_{l-q-2}}(x^{p^{l-q-3}}-1)^{b_{l-q-3}}\cdots(x^{p^{1}}-1)^{b_{1}}(x^{p^{0}}-1)^{b_{0}}$. Since $r < p^{l-q-1}$, we have $p^{l-q-1}=r+j$ for some non-zero $j$. Thus, $(x-1)^{p^{l-q-1}-j}h(x)=(x^{p^{l-q-1}}-1) \in C$. Hence, the subcode generated by $h(x)$ has minimum distance 2. By Lemma \ref{lm-md}, the subcode generated by $(x^{p^{l-q}}-1)^{b_{l-q}}h(x)$ has minimum distance $2(b_{l-q}+1)$. By induction on $q$, we can see that the code generated by $f_{8}(x)$ has minimum distance $2(b_{l-1}+1)(b_{l-2}+1)\cdots(b_{l-q}+1)$. Thus, $d(C) = 2(b_{l-1}+1)(b_{l-2}+1)\cdots(b_{l-q}+1)$.\\
\end{proof}

\section{Examples} \label{exm}
\begin{example}
Cyclic codes of length $4$ over the ring $R_{u^2,v^2,w^2,2}$. We have
\begin{equation}
x^4-1=(x-1)^4 ~ \text{over} ~ \F_2\notag
\end{equation}
Let $g=x-1$ . The some of the non zero cyclic codes of length 4 over the ring $R_{u^2,v^2,w^2,2}$ with generator polynomials, rank and minimum distance are given in Tables 1 and Binary images of some cyclic codes of length 4 over $R_{u^2,v^2,w^2,2}$ are given in Tables 2.

\end{example}

{\bf Table 1.} Non zero cyclic codes of length 4 over $R_{u^2,v^2,w^2,2}$.\\
\begin{center}
\begin{tabular}{| l | c | c |}
\hline
Non-zero generator polynomials & Rank & d(C)\\
\hline
$\langle vwg^2+(c_0+c_1x)uvw \rangle $& 2 & 2 \\
\hline
$\langle vwg+c_0uvw \rangle$& 3 & 2\\
\hline
$\langle uwg^3+c_1vwg^3+c_0uvwg^2 \rangle $& 1 & 4\\
\hline
$\langle uwg^3+c_0uvw(c_2+c_3x),vwg^3+(c_0+c_1x)uvw \rangle $ & 3 & 4\\
\hline
$\langle uwg^2+c_0vwg^2$, $uvw \rangle $& 4 & 1\\
\hline
$\langle wg^3+c_2uwg^2+c_3vwg^2+c_4uvw$, $uvwg^2 \rangle $& 2 & 2\\
\hline
$\langle wg^3+c_4uwg+c_5vwg+uvw(c_6+c_7x)$,$uwg^2+uvw$& 4 & 2\\
$(c_2+c_3x)$, $vwg^2+uvw(c_0+c_1x)$, $uvwg \rangle$& &\\
\hline
$ \langle wg+c_1uw+c_2vw \rangle$& 3 & 2\\
\hline
$ \langle uvg^3+c_1uwg^3+c_2vwg+c_3uvwg$,$vwg^2+c_0uvwg \rangle$& 3 & 2\\
\hline
$\langle vg^3+c_1uvg+c_1vwg$,$uvg^2+c_0wg,$ $wg^2$,$uw$,$vw \rangle$ & 8 & 1\\
\hline
$\langle ug^2+vg+c_1uv+c_0wg$,$vg^2+c_1uv+c_0wg$,$uvg+c_1wg$,$wg^2+c_2uw$, & 9 & 1\\
$uwg$, $vw \rangle$ &&\\
\hline
$\langle g^2+v+u+c_1w$,$vg+u$,$ug+c_1w$,$uv+c_1w$,$wg$,$uw$,$vw \rangle$& 8 & 1\\
\hline

\end{tabular}
\end{center}

\begin{center}
{\bf Table 2.} Binary images of some  cyclic codes of length 4 over $R_{u^2,v^2,w^2,2}$.
\begin{tabular}{| l | c| c |}
\hline
Non-zero generator polynomials &$\phi_L(C)$\\
\hline
$\langle uvwg^3\rangle$ & [32,1,32]*\\
\hline
$\langle vwg^3+uvwg^2 \rangle$ & [32,2,16]\\
\hline
$\langle uwg^3+vwg^2+uvwg,uvwg^2  \rangle$ &$[32,3,16]^{*-2}$\\
\hline
$\langle uvg^3+uwg^2+vwg^2+uvwg,uwg^3+vwg^3 \rangle$ &[32,4,16]*\\
\hline
$\langle uvg^3+uwg^2+vwg^2+uvw,uwg^3+vwg^3,uvwg \rangle$ &[32,5,16]*\\
\hline
\end{tabular}
\end{center}

\begin{example}
Cyclic codes of length $3$ over the ring $R_{u^2,v^2,w^2,3}$. We have
\begin{equation}
x^3-1=(x-1)^3 ~ \text{over} ~ \F_3\notag
\end{equation}
Let $g=x-1$ . The some of the non zero cyclic codes of length 3 over the ring $R_{u^2,v^2,w^2,3}$ with generator polynomials, rank and minimum distance are given in Tables below:

\end{example}

{\bf Table 3.} Non zero cyclic codes of length 3 over $R_{u^2,v^2,w^2,3}$.
\begin{center}
\begin{tabular}{| l | c| c |}
\hline
Non-zero generator polynomials & Rank & d(C)\\
\hline
$\langle uwg^2+vwg,vwg^2 \rangle$& 3 & 3 \\
\hline
$\langle uwg+c_2vw+c_1uvw,vwg+c_0uvw \rangle$& 4 & 2 \\
\hline
$\langle wg^2+uw+vw,uwg+vw,vwg,uvw \rangle$& 4 &1 \\
\hline
$\langle wg+c_1uw+c_2vw,uwg+c_0vw \rangle$& 3 & 2\\
\hline
$\langle uvg+c_2uw+c_1vw,wg+c_0uw \rangle$& 5 &1 \\
\hline
$\langle vg^2+c_4uvg+c_3w+c_2uw,uvg+wg^2+c_1uw,uwg^2+c_0vw,vwg,uvw \rangle$& 5 & 1\\
\hline
$\langle ug+c_3v+c_2wg+c_1vw,vg+c_1wg+c_1vw,uv+c_0wg$& 8 & 3 \\
$+c_1vw+c_0vw,wg^2+c_0vw+c_0uvw \rangle$& &\\
\hline
$\langle g^2+c_3u+c_0uvw,ug+c_2w+c_0uvw,v+c_1w+c_0uvw,wg+c_0uvw,$ & 6 &1 \\
$uwg+c_0uvw \rangle$ &&\\
\hline
\end{tabular}
\end{center}

\begin{center}
{\bf Table 4.} Ternary images of some  cyclic codes of length 3 over $R_{u^2,v^2,w^2,3}$.\\
\begin{tabular}{| l | c| c |}
\hline
Non-zero generator polynomials & $\phi_L(C)$ \\
\hline
$\langle uwg^2+2vwg^2+uvwg \rangle$  & [24,1,24]*  \\
\hline
$\langle uwg^2+2vwg^2+uvwg \rangle$  & $[24,2,16]^{*-2} $\\
\hline
$\langle uvg^2+uwg^2+vwg^2+2uvw,uvwg \rangle$  & $[24,3,15]^{*-1} $\\
\hline
$\langle uvg^2+wg^2+uwg+vwg+2uvwg,uwg^2+2uvwg,vwg^2+uvwg)\rangle$ & $[24,5,12]^{*-2} $ \\
\hline
$\langle uvg+wg^2+uwg+vwg,uwg^2+vwg,vwg^2)\rangle$ & $[24,7,8] $ \\
\hline
$\langle g^2+ug+vg+wg+uw+vw,uvg+wg+uw+vw,uwg+vw,vwg)\rangle$ & $[24,14,4]^{*-2} $ \\
\hline
\end{tabular}
\end{center}

\begin{center}
{\bf Table 5.} Ternary images of some Non zero free cyclic codes of length 3 over $R_{u^2,v^2,w^2,3}$.\\
\begin{tabular}{| l | c |}
\hline
Non-zero generator polynomials & $\phi_L(C)$\\
\hline
$\langle g^2+ug+uv+uw+vw\rangle$  &[24,8,7]\\
\hline
$\langle g+u+v+uv+w+uw+vw+uvw\rangle$ & [24,16,4]\\
\hline
\end{tabular}
\end{center}

\begin{example}
Cyclic codes of length $5$ over the ring $R_{u^2,v^2,w^2,5}$. We have
\begin{equation}
x^5-1=(x-1)^5 ~ \text{over} ~ \F_5\notag
\end{equation}
Let $g=x-1$ . The some of the non zero cyclic codes of length 5 over the ring $R_{u^2,v^2,w^2,5}$ with generator polynomials, rank and minimum distance are given in Tables below:

\end{example}

{\bf Table 6.} Non zero cyclic codes of length 5 over $R_{u^2,v^2,w^2,5}$.
\begin{center}
\begin{tabular}{| l | c| c |}
\hline
Non-zero generator polynomials & Rank & d(C)\\
\hline
$\langle vwg^4+uvwg^3\rangle$& 1 &5 \\
\hline
$\langle uwg^4+vwg^3+uvwg$,$vwg^4+uvwg^2$, $uvwg^3\rangle$& 3 &4 \\
\hline
$\langle uwg^3+c_4vwg+uvw(c_2+c_3x)$,$vwg^2+(c_0+c_1x)uvw \rangle$& 5 &3 \\
\hline
$\langle wg^4+uw(c_6+c_7x)+vwc_8+uvwc_9$,$uwg^2+vw(c_4+c_5x)+$& 5 &3 \\
$uvw(c_2+c_3x)$,$vwg^2+uvw(c_0+c_1x)\rangle$&  & \\
\hline
$\langle uvg^4+wc'_2g^3+uw(c'_0+c'_1x)+vw(c'_0+c'_1x)+uvw$,$wg^4+c_9uwg+$& 6 &3 \\
$vw(c_6+c_7x)+c_8uvw,uwg^2+vw(c_3+c_4x+c_5x^2)+uvw(c_1+c_2x)$, &  &\\
$vwg^2+uvw(c_0+c_1x)\rangle$ & &\\
\hline
$\langle uvg^3+c'_4wg^2+c'_3uw+c'_2vw$,$wg^4+uwg+c'_1vw+c'_0uvw$,& 6 & 3\\
$uwg^3+vw(c_3+c_4x)+c_5uvw$,$vwg^3+(c_0+c_1x)uvw$, $uvwg^2\rangle$ & &\\
\hline
$\langle vg^4+c'_0uvg+w(c_1+c'_2x)+c'_3uw+c'_4vw$,$uvg^2+c'_7wg^2+c'_8uw$& 8  &2 \\
$+c'_9vw+c'_5uvw$,$wg^4+c_4uwg^2+c_5vw+c_6uvw$,$uwg^3+vw$& &\\
$+uvwg ,vwg+(c_0+c_1x)uvw \rangle$& &\\
\hline

\end{tabular}
\end{center}

\begin{center}
{\bf Table 2.} 5-ary images of some  cyclic codes of length 5 over $R_{u^2,v^2,w^2,5}$.\\
\begin{tabular}{| l | c| c |}
\hline
Some non-zero generator polynomials & $\phi_L(C)$\\
\hline
$\langle uuvg^4\rangle$  & [40,1,40]* \\
\hline
$\langle vwg^4+uvwg^3 \rangle$  & $[40,2,32]^{*-1} $ \\
\hline
$\langle vwg^4+3uvwg^2,uvwg^3 \rangle$  & [40,3,28]\\
\hline
$\langle uvg^4+2uwg^3+vwg^4+uvwg^2,uwg^4+vwg^4+4uvwg^2 \rangle$  & $[40,4,28]^{*-2} $  \\
\hline
$\langle g,u+3v+w\rangle$ &  [40,37,2]* \\
\hline
$\langle g,uv+3w+2vw+uw\rangle$ &  $[40,36,2]^{*-2} $ \\
\hline

\end{tabular}
\end{center}

\begin{center}
{\bf Table 3.} 5-ary images of some Non zero free cyclic codes of length 5 over $R_{u^2,v^2,w^2,5}$.\\
\begin{tabular}{| l | c |}
\hline
Non-zero generator polynomials &$ \phi_L(C)$\\
\hline
$\langle g^4+2ug^3+v(3+2x+x^2)+uv(1+3x+4x^2)+w(3x+x^2)$ & [40,8,18]\\
$+uw(3+x+x^2)+vw(1+2x+x^2)+uvw(3x+x^2)\rangle$ & \\
\hline
$\langle g^3+uc_6g^2+vc_5(1+3x)+uvc_4(3+x)+wc_3(1+4x)+$ &  [40,16,12]\\
$uwc_2(2+x)+vwc_1(3+4x)+uvw(2+3x)\rangle$ & \\
\hline
$\langle g^2+u(1+4x)+3uv+uw+vw\rangle$  &  $[40,24,8]^{b-1} $\\
\hline
$\langle g+4u\rangle$  &  [40,31,3]\\
\hline
\end{tabular}
\end{center}

\bibliographystyle{plain}
\bibliography{ref}
\end{document}